\newcommand{\bra}[1]{\left\langle{#1}\right\vert}
\newcommand{\ket}[1]{\left\vert{#1}\right\rangle}
\newcommand{\qw}[1][-1]{\ar @{-} [0,#1]}
\newcommand{\qwx}[1][-1]{\ar @{-} [#1,0]}
\newcommand{\cwx}[1][-1]{\ar @{=} [#1,0]}
\newcommand{\gate}[1]{*{\xy *+<.6em>{#1};p\save+LU;+RU **\dir{-}\restore\save+RU;+RD **\dir{-}\restore\save+RD;+LD **\dir{-}\restore\POS+LD;+LU **\dir{-}\endxy} \qw}
\newcommand{\meter}{\gate{\xy *!<0em,1.1em>h\cir<1.1em>{ur_dr},!U-<0em,.4em>;p+<.5em,.9em> **h\dir{-} \POS <-.6em,.4em> *{},<.6em,-.4em> *{} \endxy}}
\newcommand{\measure}[1]{*+[F-:<.9em>]{#1} \qw}
\newcommand{\control}{*!<0em,.025em>-=-{\bullet}}
\newcommand{\ctrl}[1]{\control \qwx[#1] \qw}
\newcommand{\targ}{*!<0em,.019em>=<.79em,.68em>{\xy {<0em,0em>*{} \ar @{ - } +<.4em,0em> \ar @{ - } -<.4em,0em> \ar @{ - } +<0em,.36em> \ar @{ - } -<0em,.36em>},<0em,-.019em>*+<.8em>\frm{o}\endxy} \qw}
\newcommand{\multigate}[2]{*+<1em,.9em>{\hphantom{#2}} \qw \POS[0,0].[#1,0];p !C *{#2},p \save+LU;+RU **\dir{-}\restore\save+RU;+RD **\dir{-}\restore\save+RD;+LD **\dir{-}\restore\save+LD;+LU **\dir{-}\restore}
\newcommand{\ghost}[1]{*+<1em,.9em>{\hphantom{#1}} \qw}
\newcommand{\push}[1]{*{#1}}
\newcommand{\rstick}[1]{*!L!<-.5em,0em>=<0em>{#1}}
\newcommand{\lstick}[1]{*!R!<.5em,0em>=<0em>{#1}}
\newcommand{\Qcircuit}[1][0em]{\xymatrix @*[o] @*=<#1>}
\newcommand{\ignore}[1]{}
\newcommand{\noises}{Refer to Tab. \ref{tab:errortypes} for descriptions of the types of noise.}
\newcommand{\figErrDetect}{
\begin{figure}[b]
  \caption{\label{fig:ErrDetect} General bit flip error detection process. If we are performing post-selection, we start all over if an error is detected.}
\vspace{-6mm}
  \begin{center}
    \[
      \Qcircuit @C=1em @R=.7em {
   	                &            & \ctrl{4} & \qw      & \qw      & \qw\\
\push{\rule{5.5em}{0em}}&	\lstick{\text{Data qubits}}    & \qw      & \ctrl{4} & \qw      & \qw\\
                      &            & \qw      & \qw      & \ctrl{4} & \qw\\
                      &            &          &          &          &  \\
   	                &            & \targ    & \qw      & \qw      & \meter\\
\push{\rule{5.5em}{0em}}&	\lstick{\text{Ancilla qubits}} & \qw      & \targ    & \qw      & \meter\\
                      &            & \qw      & \qw      & \targ    & \meter
      }
    \]
  \end{center}
\end{figure}
}
\newcommand{\figLogDetect}{
\begin{figure}[b]
  \caption{\label{fig:LogDet} Logical qubits under bit flip error detection}
\vspace{-6mm}
  \begin{center}
    \[
      \Qcircuit @C=1em @R=.7em {
\push{\rule{2em}{0em}} &   \lstick{\overline{\ket{\psi_D}}} & \ctrl{1} & \qw \\
\push{\rule{2em}{0em}} &   \lstick{\overline{\ket{\psi_A}}} & \targ & \meter
      }
    \]
  \end{center}
\end{figure}
}
\newcommand{\figPhysDetect}{
\begin{figure}[b]
  \caption{\label{fig:PhysDet} A pair of physical qubits under error detection. Pre-existing noise on the source qubit is given by $\mathcal{S}$, pre-existing noise on the destination qubit by $\mathcal{D}$. The noise from the {\tt CNOT} gate is $\mathcal{Q}$.}
\vspace{-6mm}
  \begin{center}
    \[
      \Qcircuit @C=1em @R=.7em {
\push{\rule{2em}{0em}} & \lstick{\ket{\psi}_S}       & \gate{\mathcal{S}} & \ctrl{2} & \multigate{2}{\mathcal{Q}} & \qw \\
                            &  \\
\push{\rule{2em}{0em}} & \lstick{\ket{\psi}_D}	    & \gate{\mathcal{D}} & \targ & \ghost{\mathcal{Q}} & \meter
      }
    \]
  \end{center}
\end{figure}
}
\newcommand{\figTeleportation}{
\begin{figure}[b]
  \caption{\label{fig:Teleportation} Teleportation circuit }
\vspace{-6mm}
  \begin{center}
    \[
      \Qcircuit @C=1em @R=.7em {
\push{\rule{2em}{0em}} &   \lstick{\ket{\psi}} & \ctrl{1} & \qw     & \measure{X} \cwx[3]\\
	                    && \targ    & \meter \cwx[2]  \\
\push{\rule{6em}{0em}} &   \lstick{\frac{1}{\sqrt{2}}(\ket{00}+\ket{11})} \\
                       && \qw      & \gate{X}  & \gate{Z} & \qw & \rstick{\ket{\psi}}
      }
    \]
  \end{center}
\end{figure}
}
\newcommand{\figTeleportationSplit}{
\begin{figure}[b]
  \caption{\label{fig:TeleportationSplit} Teleportation splitting circuit.  }
\vspace{-6mm}
  \begin{center}
    \[
      \Qcircuit @C=1em @R=.7em {
\push{\rule{5em}{0em}} &   \lstick{c_0 \ket{0} + c_1 \ket{1}} & \ctrl{1} & \qw     & \measure{X} \cwx[2]\\
	                    && \targ    & \meter \cwx[1]  \\
\push{\rule{7em}{0em}} &   \lstick{\frac{1}{\sqrt{2}}(\ket{000}+\ket{111})} & \qw & \gate{X} \cwx[2] &  \gate{Z} & \qw \\
                       &&&&& \rstick{c_0 \ket{00} + c_1 \ket{11}}\\
                       && \qw      & \gate{X}  & \qw & \qw  \\
      }
    \]
  \end{center}
\end{figure}
}
\newcommand{\figTeleportationMerge}{
\begin{figure}[b]
  \caption{\label{fig:TeleportationMerge} Teleportation merging circuit.  }
\vspace{-6mm}
  \begin{center}
    \[
      \Qcircuit @C=1em @R=.7em {
\push{\rule{2em}{0em}} &	\lstick{\ket{c}}          & \ctrl{2} & \qw      & \qw            &  \qw             & \measure{X} \cwx[4]\\
\push{\rule{2em}{0em}} &	\lstick{\ket{d}}          & \qw      & \ctrl{2} & \qw            &  \qw             & \qw & \measure{X} \cwx[3] \\
\push{\rule{2em}{0em}} &	\lstick{\ket{a}}          & \targ    & \qw      & \meter \cwx[2]\\
\push{\rule{2em}{0em}} &	\lstick{\ket{b}}          & \qw      & \targ    & \qw            &  \meter \cwx[1] \\
\push{\rule{3em}{0em}} &	\lstick{\ket{a \oplus b}} & \qw      & \qw      & \gate{X}       &  \gate{X}        & \gate{Z} & \gate{Z} & \qw & \rstick{\ket{c \oplus d}} \\
      }
    \]
  \end{center}
\end{figure}
}
\newcommand{\figCNOTTeleportation}{
\begin{figure}[b]
  \caption{\label{fig:CNOTTeleportation} {\tt CNOT} gate via Teleportation. $\ket{\psi_3}$ is $\ket{\psi_1}$ {\tt CNOT} $\ket{\psi_2}$. }
\vspace{-6mm}
  \begin{center}
    \[
      \Qcircuit @C=1em @R=.7em {
\push{\rule{2em}{0em}} &   \lstick{\ket{\psi_1}} &  \ctrl{2} & \qw      & \qw   & \qw & \measure{X} \cwx[4]  \\
\push{\rule{2em}{0em}} &	\lstick{\ket{\psi_2}} &  \qw      & \ctrl{2} & \qw   & \qw & \qw & \measure{X}  \cwx[4]\\
\push{\rule{2em}{0em}} &   \lstick{\ket{a}}      &  \targ    & \qw      & \meter \cwx[2] \\
\push{\rule{2em}{0em}} &   \lstick{\ket{b}}      & \qw       & \targ    & \qw   &\meter \cwx[2]\\
\push{\rule{2em}{0em}} &   \lstick{\ket{a}}      & \qw       & \qw      &\gate{X} \cwx[1] & \qw      & \gate{Z} & \qw  & \qw &  \rstick{ \ket{\psi_1}} \\
\push{\rule{3em}{0em}} &   \lstick{\ket{a \oplus b}} & \qw       & \qw      &\gate{X} & \gate{X} & \qw & \gate{Z}  & \qw& \rstick{\ket{\psi_3}}
      }
    \]
  \end{center}
\end{figure}
}
\newcommand{\figNNCNOT}{
\begin{figure}[b]
  \caption{\label{fig:encode} Creating encoded $\overline{\ket{0}}$ for the $[[7,1,3]]$ CSS code. This results in the state $\overline{\ket{0}} = \frac{1}{\sqrt{8}} (\ket{0000000} + \ket{0001111} + \ket{0110011} + \ket{0111100} + \ket{1010101} + \ket{1011010} + \ket{1100110} + \ket{1101001})$.}
\vspace{-6mm}
  \begin{center}
    \[
      \Qcircuit @C=1em @R=.7em {
	\lstick{\ket{+}} & \ctrl{6} & \qw      & \qw      \\
        \lstick{\ket{+}} & \qw      & \ctrl{5} & \qw      \\
        \lstick{\ket{0}} & \targ    & \targ    & \qw      \\
        \lstick{\ket{+}} & \qw      & \qw      & \ctrl{3} & \rstick{\overline{\ket{0}}} \\
        \lstick{\ket{0}} & \targ    & \qw      & \targ    \\
        \lstick{\ket{0}} & \qw      & \targ    & \targ    \\
        \lstick{\ket{0}} & \targ    & \targ    & \targ    
      }
    \]
  \end{center}
\end{figure}
}
\newcommand\braket[2]{{ \langle {#1} | {#2} \rangle}}
\newtheorem{theorem}{Theorem}
\newtheorem{proposition}[theorem]{Proposition}
\newtheorem{corollary}[theorem]{Corollary}
\newcommand{\qedsymb}{\hfill{\rule{2mm}{2mm}}}
\newenvironment{proof}[1][]{\begin{trivlist}
\item[\hspace{\labelsep}{\bf\noindent Proof#1:\/}] }{\qedsymb\end{trivlist}}
\begin{document}
\title{An upper bound on quantum fault tolerant thresholds}
\author{Jesse Fern}

\begin{abstract}
In this paper we calculate upper bounds on fault tolerance, without restrictions on the overhead involved. Optimally adaptive recovery operators are used, and the Shannon entropy is used to estimate the thresholds. By allowing for unrealistically high levels of overhead, we find a quantum fault tolerant threshold of $6.88\%$ for the depolarizing noise used by Knill \cite{Knill}, which compares well to "above $3\%$" evidenced by Knill. We conjecture that the optimal threshold is $6.90\%$, based upon the hashing rate. We also perform threshold calculations for types of noise other than that discussed by Knill. 
\end{abstract}

\affiliation{Berkeley Center for Quantum Information and Computation, Department of Mathematics, University of California, Berkeley, California, 94720
}
\email{jesse@math.berkeley.edu}

\maketitle

\section{Introduction}

In this paper, we calculate the quantum fault tolerant thresholds for the $[[7,1,3]]$ and $[[23,1,7]]$ CSS codes under massive amounts of post-selection. Post-selection \cite{Knill3} is the process of performing error detection, and then rejecting the result if an error is detected. We optimize the recovery operators under concatenation \cite{MoreChans} to find the optimal thresholds. Here there are no restrictions on the quantity of overhead involved. Optimizations of the error threshold with respect to the overhead will be considered in future work.

Suppose that, given a logical data qubit $\ket{\psi_D}$ encoded into some physical data qubits and a logical ancilla qubit $\ket{\psi_A}$ encoded into some physical ancilla qubits, we wish to detect bit flip errors in the data qubits. We apply {\tt CNOT} gates transversally from the data qubits to the ancilla qubits, and then measure the ancilla qubits, as shown in Fig. \ref{fig:ErrDetect}. The fact that the syndrome measurement may be faulty is included in our analysis.  For post-selection, we start all over if an error is detected. 
\figErrDetect

This process works well for CSS codes and bit flip qubits. {\tt CNOT} is encoded as itself  acting transversally on each pair of qubits. In terms of the logical qubits, the result will resemble Fig. \ref{fig:LogDet}. For an arbitrary logical data qubit $\overline{\ket{\psi_D}}$, we use $\overline{\ket{\psi_A}} = \overline{\ket{+}}$, where $\ket{+} = \frac{1}{\sqrt{2}} (\ket{0} + \ket{1})$ for the logical ancilla qubit. If the logical source qubit is an encoded Pauli eigenstate of $\overline{Z}$, that is $\overline{\ket{0}}$ or $\overline{\ket{1}}$, then it is more efficient for the logical destination ancilla qubit to also be an encoded Pauli eigenstate of $\overline{Z}$. 

\figLogDetect

We assume the ideal case that the noise on each qubit is independent, and that any bit flip error that was sent to or was already existing on the destination (ancilla) qubits can be detected.  We model this by looking at each physical data qubit and physical ancilla qubit pair separately. However, it is not exactly true that all of these errors can be detected; if $w$ is the lowest weight (that is, the number of qubits for which the Pauli matrix is not the identity on that qubit) of a stabilizer element (other than the identity) for the code, then there are some cases where a total of $w$ errors in the source or destination qubits can result in an undetected stabilizer error in the destination qubits.

We run error detection separately for bit flip errors ($X$ or $Y$) and phase flip errors ($Z$ or $Y$). One way to do this is to apply Hadamard gates in between the error detection steps to change bit flip errors into phase flip errors, and vice versa. In this paper, we apply post-selection to reduce the rate of errors in some ancilla qubits.

In Sec. \ref{sec:basic}, the noise resulting from a {\tt CNOT} gate followed by measurement of the destination qubit is found. The same results are derived using the channel maps \cite{FKSS,MoreChans} of quantum codes in Appendix \ref{sec:codemethod}. Both of these methods assume the same noise on each qubit.

In Sec. \ref{sec:special}, we apply these results to a code encoded in some specific ancilla qubits that allow for an encoded {\tt CNOT} with error correction using just one {\tt CNOT} per logical data qubit. We again assume the same noise on each qubit.

In Sec. \ref{sec:noisemodels}, we consider realistic types of noise, and calculate the relevant quantum fault tolerant thresholds. In Sec. \ref{sec:entropy}, we examine how the fact that the noise is not the same on each qubit affects the threshold. We use the Shannon information entropy \cite{Shannon} as in \cite{MoreChans} to calculate the actual thresholds.

 In Sec. \ref{sec:converge}, we look at how well the noise is corrected when we are below the fault tolerant threshold.


\section{Basic calculations}
\label{sec:basic}
\subsection{Channel Notation}
A qubit has two orthogonal states, $\ket{0}$ and $\ket{1}$. If there is a set of states $\ket{\psi_i}$ with corresponding probabilities of occurring $p_i$, the resulting mixed state is represented by the density matrix
\begin{equation*}
\rho = \sum_i p_i \ket{\psi_i} \bra{\psi_i}.
\end{equation*}
We can write this $2$ by $2$ matrix in terms of the $4$ one qubit Pauli matrices $I$, $X$, $Y$ and $Z$ as
\begin{equation*}
\rho = \frac{1}{2}(I + c_X X + c_Y Y + c_Z Z).
\end{equation*}
Alternately, one can write this density matrix in vector form as
\begin{equation}
\label{eq:vectordensity}
v(\rho) = \begin{bmatrix}
1 \cr
c_X \cr
c_Y \cr
c_Z \cr
\end{bmatrix},
\end{equation}
which we shall call a density vector. Now we consider the map from density matrices to density matrices. These maps are of the form
\begin{equation*}
\rho \rightarrow \sum_j A_j \rho A_j^\dagger, \quad
\mathrm{where} \quad \sum_j A_j^\dagger A_j = I.
\end{equation*}
They can be written as the channel superoperator matrix that acts on the density vectors as 
\begin{equation*}
\label{noise}
\mathcal{N}^{(1)} \equiv
\begin{bmatrix}
1 & 0 & 0 & 0 \cr
N_{XI} & N_{XX} & N_{XY} & N_{XZ} \cr
N_{YI} & N_{YX} & N_{YY} & N_{YZ} \cr
N_{ZI} & N_{ZX} & N_{ZY} & N_{ZZ}
\end{bmatrix},
\end{equation*}
where the $N_{\sigma \sigma'}$ are all real.

A unitary matrix $U$ acts upon a density state as $\rho \rightarrow U \rho U^{\dagger}$. We represent it in the channel superoperator form as a superoperator  $\mathcal{O}(U)$. We represent these in the Pauli basis. A Pauli matrix $\sigma$ has the diagonal superoperator $\mathcal{O}(\sigma)$, where each of the coefficients $\mathcal{O}(\sigma)_{\sigma' \sigma'}$ is $1$ if $\sigma$ and $\sigma'$ commute, and $-1$ if they anti-commute. For the $1$ qubit Pauli matrices, the superoperators are diagonal and are written as
\begin{align}
\label{eq:pauliprob}
\begin{split}
\mathcal{O}(I) = [1,1,1,1] &\quad \mathcal{O}(X) = [1,1,-1,-1] \\
\mathcal{O}(Y) = [1,-1,1,-1] &\quad \mathcal{O}(Z) = [1,-1,-1,1]
\end{split},
\end{align}
where we record the diagonal elements in vector form. We can write a diagonal channel in terms of the probabilities $p_\sigma$ of having the Pauli error $\sigma$ as
\begin{equation*}
[1, N_{XX}, N_{YY}, N_{ZZ}] =  \sum_{\sigma} p_\sigma \mathcal{O}(\sigma)= [1,x,y,z] ,
\end{equation*}
which, using Eq. \ref{eq:pauliprob}, gives
\begin{align}
\label{eq:chanpauli}
x &= 1 - 2 (p_Y + p_Z) \nonumber \\
y &= 1 - 2 (p_X + p_Z) \\
z &= 1 - 2 (p_X + p_Y). \nonumber
\end{align}


\subsection{Measurement channels}

\begin{proposition}
If we have a pure state $\nu = \ket{\psi} \bra{\psi}$ and a state $\rho$, and we measure the state $\rho$ in an orthonormal basis which includes $\ket{\psi}$, then the probability of measuring $\ket{\psi}$ is the fidelity, which is half of the inner product of the vectorized density matrices (Eq. \ref{eq:vectordensity}): 
\begin{equation*}
F(\rho, \nu) = \frac{1}{2} \braket{v(\rho)}{v(\nu)}.
\end{equation*}
\end{proposition}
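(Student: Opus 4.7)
The plan is to compute the measurement probability directly and then match it with the vectorized inner product. First I would note that, since $\ket{\psi}$ lies in the measurement basis, the probability of obtaining outcome $\ket{\psi}$ when measuring $\rho$ is $\bra{\psi} \rho \ket{\psi}$, which, because $\nu = \ket{\psi}\bra{\psi}$ is a rank-one projector, equals $\operatorname{Tr}(\rho \nu)$. This reduces the proposition to showing that $\operatorname{Tr}(\rho \nu) = \tfrac{1}{2} \braket{v(\rho)}{v(\nu)}$.

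Next, I would expand both states in the Pauli basis as in the channel notation subsection: write $\rho = \tfrac{1}{2}(I + c_X X + c_Y Y + c_Z Z)$ and $\nu = \tfrac{1}{2}(I + d_X X + d_Y Y + d_Z Z)$, so that $v(\rho) = (1, c_X, c_Y, c_Z)^T$ and $v(\nu) = (1, d_X, d_Y, d_Z)^T$. Multiplying out $\rho \nu$ and using the Pauli identities $\sigma^2 = I$ for $\sigma \in \{X,Y,Z\}$ together with $\operatorname{Tr}(I) = 2$, $\operatorname{Tr}(X) = \operatorname{Tr}(Y) = \operatorname{Tr}(Z) = 0$, and $\operatorname{Tr}(\sigma \sigma') = 0$ for distinct nonidentity Paulis, the only surviving contributions to the trace are the diagonal ones. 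A short calculation gives
\begin{equation*}
\operatorname{Tr}(\rho \nu) = \tfrac{1}{2}\bigl(1 + c_X d_X + c_Y d_Y + c_Z d_Z\bigr),
\end{equation*}
which is exactly $\tfrac{1}{2} \braket{v(\rho)}{v(\nu)}$, proving the claim.

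There is really no main obstacle here; the proposition is essentially a dictionary translation between the Pauli-coefficient representation already introduced in Eq.~\ref{eq:vectordensity} and the Hilbert–Schmidt inner product. The only bookkeeping step that deserves care is the factor of $1/2$, which comes from the normalization $\operatorname{Tr}(I) = 2$ combined with the $1/2$ prefactors in the Pauli expansions of $\rho$ and $\nu$; getting this constant right is the one place a sign or factor error could creep in.
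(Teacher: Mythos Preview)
Your proposal is correct and follows essentially the same route as the paper: identify the measurement probability with $\operatorname{Tr}(\rho\nu)=\bra{\psi}\rho\ket{\psi}$, then expand both states in the Pauli basis and use trace-orthogonality to obtain $\tfrac{1}{2}\braket{v(\rho)}{v(\nu)}$. The paper additionally spells out $\bra{\psi}\rho\ket{\psi}$ via a convex decomposition of $\rho$, but this is a cosmetic difference only.
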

\begin{proof}
$\rho$ can be written as a linear combination of pure states as $\rho = \sum_i d_i \ket{\gamma_i} \bra{\gamma_i}$, and so the probability of measuring $\ket{\psi}$ is  
\begin{equation*}
F = \sum_i d_i \braket{\gamma_i}{\psi}^2 = \sum_i d_i \braket{\psi}{\gamma_i} \braket{\gamma_i}{\psi} = \bra{\psi} \rho \ket{\psi}.
\end{equation*}
Now, $\text{tr } \rho \nu = \sum_i \bra{a_i} \rho \nu \ket{a_i}$ for any basis $\ket{a_i}$, so
\begin{equation*}
\text{tr } \rho \nu = \text{tr } \rho \ket{\psi} \bra{\psi} = \bra{\psi} \rho \ket{\psi} \braket{\psi}{\psi} = \bra{\psi} \rho \ket{\psi} = F.
\end{equation*}
If
\begin{equation*}
v(\rho) = \begin{bmatrix}
1 \cr
\rho_X \cr
\rho_Y \cr
\rho_Z \cr
\end{bmatrix}, v(\nu) = \begin{bmatrix}
1 \cr
\nu_X \cr
\nu_Y \cr
\nu_Z \cr
\end{bmatrix},
\end{equation*}
then
\begin{equation*}
\text{tr } \rho \nu = \frac{1}{2} \sum_{\sigma} \rho_{\sigma} \nu_{\sigma} = \frac{1}{2} \braket{v(\rho)} {v(\nu)}.
\end{equation*}
\end{proof}

From this it follows that:
\begin{corollary}
\label{cor:measure}
If a noise $\mathcal{N}$ is applied to the state $\rho$, the probability of measuring the pure state $\nu$ is
\begin{equation*}
\frac{1}{2} \bra{v(\nu)} \mathcal{N} \ket{ v(\rho)}.
\end{equation*}
\end{corollary}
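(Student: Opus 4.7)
The plan is to apply the previous proposition directly to the state obtained after the noise channel has acted. Let $\rho' = \mathcal{N}(\rho)$ denote the density matrix that results from applying $\mathcal{N}$ to $\rho$. By construction of the channel superoperator matrix $\mathcal{N}^{(1)}$ defined earlier, its action on density vectors is exactly the linear map $v(\rho') = \mathcal{N}\, v(\rho)$; this is simply how the superoperator representation was set up in the paragraph following Eq. \ref{eq:vectordensity}.

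Once this is in place, the probability of measuring the pure state $\nu$ on the actual post-noise state $\rho'$ is, by the preceding proposition, the fidelity
\begin{equation*}
F(\rho', \nu) = \tfrac{1}{2} \braket{v(\rho')}{v(\nu)}.
\end{equation*}
Substituting $v(\rho') = \mathcal{N}\, v(\rho)$ and using that the inner product is symmetric (all entries of the density vectors and of $\mathcal{N}$ are real), this becomes
\begin{equation*}
\tfrac{1}{2} \bra{v(\nu)} \mathcal{N} \ket{v(\rho)},
\end{equation*}
which is the claimed expression.

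The only thing that might require a line of justification, rather than being immediate, is the identification $v(\mathcal{N}(\rho)) = \mathcal{N}\, v(\rho)$. This is not really an obstacle so much as a bookkeeping check: it amounts to observing that the Pauli-basis expansion coefficients of $\mathcal{N}(\rho)$ are precisely $\sum_{\sigma'} N_{\sigma \sigma'} c_{\sigma'}$, with the constant top entry preserved because $\mathcal{N}$ is trace-preserving (hence the first row of $\mathcal{N}^{(1)}$ is $[1,0,0,0]$). With that observation, the corollary follows in one line from the proposition.
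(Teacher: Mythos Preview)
Your proof is correct and matches the paper's approach: the paper simply writes ``From this it follows that'' before stating the corollary, leaving the one-line derivation you spelled out (apply the proposition to $\rho' = \mathcal{N}(\rho)$ and use $v(\rho') = \mathcal{N}\,v(\rho)$) implicit.
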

 For example, if they are both the pure state $\ket{0}$, $\rho = \nu = \frac{1}{2}(I+Z)$, and so
\begin{equation*}
\frac{1}{2} \begin{bmatrix}
1 & 0 & 0 & 1 \cr
\end{bmatrix} \mathcal{N} \begin{bmatrix}
1 \cr
0 \cr
0 \cr
1 \cr
\end{bmatrix} = \frac{N_{II}+N_{IZ}+N_{ZI}+N_{ZZ}}{2}.
\end{equation*}

\begin{theorem}
\label{thm:correct}
If a noise $\mathcal{N}$ is applied to a qubit, and then the qubit is measured in the Pauli operator $\sigma$ basis, which yields one of the Pauli eigenstates $\ket{\sigma_+}$ or $\ket{\sigma_-}$, the correct result is obtained by the measurement with probability
\begin{equation*}
m_c = \frac{1 + N_{\sigma \sigma}}{2}.
\end{equation*}
\end{theorem}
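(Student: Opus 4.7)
The plan is to reduce the claim to a direct application of Corollary \ref{cor:measure}. The ``correct'' measurement outcome breaks into two cases: the physical state prior to noise is either $\ket{\sigma_+}$ and we want the outcome $\ket{\sigma_+}$, or the physical state is $\ket{\sigma_-}$ and we want the outcome $\ket{\sigma_-}$. In both cases $\rho=\nu$, and the density vector $v(\nu)$ has a $1$ in the $I$ entry, $\pm 1$ in the $\sigma$ entry, and zeros in the entries corresponding to the other two Pauli operators. So the work reduces to expanding the bilinear form $\tfrac{1}{2}\bra{v(\nu)}\mathcal{N}\ket{v(\rho)}$ with only four nonzero contributions.

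Carrying out that expansion for the $\ket{\sigma_\pm}$ case gives $\tfrac{1}{2}\bigl(N_{II} \pm N_{I\sigma} \pm N_{\sigma I} + N_{\sigma\sigma}\bigr)$. Because $\mathcal{N}$ is a valid channel superoperator in the normalization convention of Eq.~\eqref{eq:vectordensity}, its first row is $[1,0,0,0]$, so $N_{II}=1$ and $N_{I\sigma}=0$. This already collapses the expression to $\tfrac{1}{2}\bigl(1 \pm N_{\sigma I} + N_{\sigma\sigma}\bigr)$.

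To finish, I would note that throughout the paper the relevant noise channels are diagonal Pauli channels (as in Eq.~\eqref{eq:chanpauli}), for which $N_{\sigma I}=0$; the formula $m_c = (1+N_{\sigma\sigma})/2$ then follows immediately for each of the two eigenstates individually. For the slightly more general unital setting, the same conclusion holds because unitality forces the first column of $\mathcal{N}$ to be $[1,0,0,0]^T$. If instead one allows non-unital $\mathcal{N}$, one recovers the stated formula by averaging the two eigenstate cases with equal prior probability, since the $\pm N_{\sigma I}$ terms cancel.

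The computation itself is routine; the only subtle point is making the assumption on $N_{\sigma I}$ explicit (or interpreting $m_c$ as the eigenstate-averaged probability). I would state the unitality hypothesis once at the beginning of the proof and then let the linear-algebra expansion via Corollary \ref{cor:measure} close the argument in a single line.
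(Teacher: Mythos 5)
Your proposal is correct and follows essentially the same route as the paper: apply Corollary \ref{cor:measure} to eigenstates of $\sigma$, expand the bilinear form, and dispose of the $N_{\sigma I}$ term either by the diagonal-noise assumption or by averaging the two eigenstate cases (the paper's own proof tabulates all four sign combinations and then states that the correct-measurement probability \emph{averages} to $(1+N_{\sigma\sigma})/2$, which is your third option). Your version is slightly more careful in making explicit that $N_{II}=1$ and $N_{I\sigma}=0$ follow from the channel normalization and in flagging the hypothesis needed to kill $N_{\sigma I}$, which the paper leaves implicit.
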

\begin{proof}
We have initial state $\rho_i$ and measured state $\rho_m$, which are eigenstates of $\sigma$, with associated eigenvalues $\lambda_i$ and $\lambda_m$. Then, from Cor. \ref{cor:measure}, after a noise $\mathcal{N}$ occurs, the probability of measuring the initial state $\rho_i$ as the measured state $\rho_m$ is 
\begin{align*}
\frac{1}{2} \bra{v(\rho_m)} \mathcal{N} \ket{v(\rho_i)} = \frac{1}{2}
\begin{bmatrix}
1 & \lambda_m  \cr
\end{bmatrix} \begin{bmatrix}
N_{II}       & N_{I \sigma} \cr
N_{\sigma I} & N_{\sigma \sigma} \cr
\end{bmatrix} \begin{bmatrix}
1\cr
\lambda_i  \cr
\end{bmatrix} \\
= \frac{N_{II} +\lambda_1 N_{I \sigma} +\lambda_2 N_{\sigma I}+ \lambda_1 \lambda_2 N_{\sigma \sigma}}{2}.
\end{align*}
Tab. \ref{tab:measureprob} gives the probabilities for the $4$ cases in which the initial eigenvalue $\lambda_i$ is either $-1$ or $+1$ and the measured eigenvalue $\lambda_m$ is either $-1$ or $+1$.

\ignore{
Given some Pauli operator $\sigma$, the probability of measuring a state that was initially 
Given some Pauli operator $\sigma$, and $\rho$  $\lambda$ eigenstate 
Suppose that for some Pauli operator $\sigma$, $\rho$ is the $\lambda_1$ eigenstate, and $\rho_M$ is the $\lambda_2$ eigenstate. Then the probability of measuring the 

Suppose that $\rho$ and $\rho_M$ are both eigenstates of the same Pauli operator $\sigma$. Then, the probability of measuring the eigenvalue $\pm_2 1$ is
\begin{align*}
\frac{1}{2}
\begin{bmatrix}
1 & \pm_2 1 \cr
\end{bmatrix} \begin{bmatrix}
N_{II}       & N_{I \sigma} \cr
N_{\sigma I} & N_{\sigma \sigma} \cr
\end{bmatrix} \begin{bmatrix}
1\cr
\pm_1 1 \cr
\end{bmatrix} \\
= \frac{N_{II} \pm_1 N_{I \sigma} \pm_2 N_{\sigma I} \pm_1 \pm_2 N_{\sigma \sigma}}{2}
\end{align*}
when the initial eigenvalue was $\pm_1 1$, and the $\pm_1$ and $\pm_2$ act independently of each other. The probabilities of each of these $4$ cases of having the initial state and the measured state in the $-1$ or $+1$ eigenstates is given in Tab. \ref{tab:measureprob}.
}

\begin{table}
\caption{\label{tab:measureprob} Measurement probabilities}
\begin{ruledtabular}
\begin{tabular}{cccc}
Measured state  &  Initial state $\lambda_i=1$ & Initial state $\lambda_i=-1$ \\
\hline
$\lambda_m=1$  &  $\frac{N_{II}+N_{I \sigma}+N_{\sigma I}+N_{\sigma \sigma }}{2}$ & $\frac{N_{II}-N_{I \sigma}+N_{\sigma I}-N_{\sigma \sigma}}{2}$ \cr
$\lambda_m=-1$  &  $\frac{N_{II}+N_{I \sigma}-N_{\sigma I}-N_{\sigma \sigma}}{2}$ & $\frac{N_{II}-N_{I \sigma }-N_{\sigma I}+N_{\sigma \sigma}}{2}$ \cr
\end{tabular}
\end{ruledtabular}
\end{table}
The total probability of a wrong measurement averages to $\frac{1-N_{\sigma \sigma}}{2}$, and the total probability of a correct measurement averages to $\frac{1+N_{\sigma \sigma}}{2}$.
\end{proof}

\subsection{Noise from applying {\tt CNOT}}
As shown in Fig. \ref{fig:PhysDet}, we have a source qubit $\ket{\psi}_S$ with pre-existing local noise $\mathcal{S}$, and a  destination qubit $\ket{\psi}_D$ with pre-existing local noise $\mathcal{D}$. When we apply {\tt CNOT}, the two qubit noise $\mathcal{Q}$ occurs on these qubits. The total superoperator for this process is found by multiplying the initial noise $\mathcal{S} \otimes \mathcal{D}$ first by the superoperator for {\tt CNOT} and then by the new noise $\mathcal{Q}$ from {\tt CNOT}. This results in 
\begin{equation*}
\mathcal{T} = Q \circ \mathcal{O}({\tt CNOT}) \circ (S \otimes D),
\end{equation*}
where $\circ$ represents the matrix multiplication of superoperators. Since we wanted  only {\tt CNOT} to be applied, the total noise can be represented in terms of a noise superoperator $\mathcal{N}$ as 
\begin{equation}
\mathcal{T} = \mathcal{N} \circ \mathcal{O}({\tt CNOT})  = Q \circ \mathcal{O}({\tt CNOT}) \circ (S \otimes D) \circ \mathcal{O}({\tt CNOT}).
\end{equation}

\figPhysDetect

\subsection{Diagonal noise}
In the rest of this paper, we assume that the noise is diagonal. We utilize the shorthand $\mathcal{N}_{\sigma} \equiv \mathcal{N}_{\sigma \sigma}$. This noise maps a density matrix $I + \sigma \rightarrow I + N_{\sigma} \sigma$. The noise on the source qubits has diagonal noise $S = [1, S_X, S_Y, S_Z]$, and the noise on the destination qubits has diagonal noise $D=[1,D_X,D_Y,D_Z]$.

If the off-diagonal terms are small, they can be neglected anyway. If they are $o(\epsilon)$, the threshold is only affected by $o(e^m)$, where $m$ is the minimum weight of the non-identity elements that stabilize the code \cite{FKSS}. For the $[[7,1,3]]$ CSS code, $m=4$. For the $[[23,1,7]]$ CSS code, $m=8$.

To determine how the {\tt CNOT} gate effects the errors, we look at the map of the Pauli matrices under conjugation by {\tt CNOT}. This gives the map
\begin{align*}
(II), (IX), IY \leftrightarrow ZY, IZ \leftrightarrow ZZ, XI \leftrightarrow XX\\
 XY \leftrightarrow YZ, XZ \leftrightarrow YY, YI \leftrightarrow YX, (ZI), (ZX).
\end{align*}
The diagonal noise $\mathcal{R} = \mathcal{O}({\tt CNOT}) \circ (S \otimes D) \circ \mathcal{O}({\tt CNOT})$ is given by Table \ref{tab:R}. From this, the resulting diagonal noise $\mathcal{N} = \mathcal{Q} \circ \mathcal{R}$ is given in Table \ref{tab:diagonal}.

\begin{table}
\caption{\label{tab:R} The noise $\mathcal{R} = \mathcal{O}({\tt CNOT}) \circ (S \otimes D) \circ \mathcal{O}({\tt CNOT})$ resulting from $\mathcal{S} \otimes \mathcal{D}$ conjugated with {\tt CNOT}.}
\begin{ruledtabular}
\begin{tabular}{ccccc}
$\sigma$ & $\mathcal{R}_{\sigma I}$ & $\mathcal{R}_{\sigma X}$ & $\mathcal{R}_{\sigma Y}$ & $\mathcal{R}_{\sigma Z}$ \cr
\hline
$I$ & $S_I D_I$ & $S_I D_X$ & $S_Z D_Y$ & $S_Z D_Z$ \cr
$X$ & $S_X D_X$ & $S_X D_I$ & $S_Y D_Z$ & $S_Y D_Y$ \cr
$Y$ & $S_Y D_X$ & $S_Y D_I$ & $S_X D_Z$ & $S_X D_Y$ \cr
$Z$ & $S_Z D_I$ & $S_Z D_X$ & $S_I D_Y$ & $S_I D_Z$ \cr
\end{tabular}
\end{ruledtabular}
\end{table}

\begin{table}
\caption{\label{tab:diagonal} The total noise $\mathcal{N}$ resulting from the error detection process.}
\begin{ruledtabular}
\begin{tabular}{ccccc}
$\sigma$ & $\mathcal{N}_{\sigma I}$ & $\mathcal{N}_{\sigma X}$ & $\mathcal{N}_{\sigma Y}$ & $\mathcal{N}_{\sigma Z}$ \cr
\hline
$I$ & $S_I D_I Q_{II}$ & $S_I D_X Q_{IX}$ & $S_Z D_Y Q_{IY}$ & $S_Z D_Z Q_{IZ}$ \cr
$X$ & $S_X D_X Q_{XI}$ & $S_X D_I Q_{XX}$ & $S_Y D_Z Q_{XY}$ & $S_Y D_Y Q_{XZ}$ \cr
$Y$ & $S_Y D_X Q_{YI}$ & $S_Y D_I Q_{YX}$ & $S_X D_Z Q_{YY}$ & $S_X D_Y Q_{YZ}$ \cr
$Z$ & $S_Z D_I Q_{ZI}$ & $S_Z D_X Q_{ZX}$ & $S_I D_Y Q_{ZY}$ & $S_I D_Z Q_{ZZ}$ \cr
\end{tabular}
\end{ruledtabular}
\end{table}

\subsubsection{After $Z$ measurement on destination qubit}
\label{ancillaprepare}
Now we assume that the destination qubit is set up so that we can detect a $Z$ error upon measuring it in the $Z$ basis. From Theorem \ref{thm:correct}, we have two types of syndromes, the "correct" one which corresponds to no error detected, and the "incorrect" one which corresponds to an error detected. These have probabilities $\frac{1 + Z}{2}$ and $\frac{1-Z}{2}$. By tracing out $N$ on the destination qubit, these result in 
\begin{align}
\label{eq:Ntraceout}
\mathcal{G}^{II}_{\sigma} = \frac{N_{\sigma I} + N_{\sigma Z}}{2} &&
\mathcal{G}^{IX}_{\sigma} = \frac{N_{\sigma I} - N_{\sigma Z}}{2}.
\end{align}
This gives
\begin{align}
\label{eq:measure}
\mathcal{G}^{II} = \frac{1}{2}(A+B) \quad \mathrm{and} \quad  \mathcal{G}^{IX} = \frac{1}{2} (A-B),
\end{align}
where

\begin{align*}
A = [1, S_X D_X Q_{XI}, S_Y D_X Q_{YI}, S_Z Q_{ZI}] \\
B = [S_Z D_Z Q_{IZ}, S_Y D_Y Q_{XZ}, S_X D_Y Q_{YZ}, D_Z Q_{ZZ}]
\end{align*}
are the 1st and 4th columns of Tab. \ref{tab:diagonal}, respectively.

\subsubsection{After $X$ measurement on source qubit and $Z$ measurement on destination qubit}
Like before, we trace out, this time with $\frac{1 \pm X}{2}$. Tracing out Eq. \ref{eq:measure}, we get probabilities of Pauli errors
\ignore{
\begin{align*}
p_I = \frac{1}{4} ( 1 + S_X D_X Q_{XI} + S_Z D_Z Q_{IZ} + S_Y D_Y Q_{XZ})\\
p_X = \frac{1}{4} ( 1 + S_X D_X Q_{XI} - S_Z D_Z Q_{IZ} - S_Y D_Y Q_{XZ})\\
p_Y = \frac{1}{4} ( 1 - S_X D_X Q_{XI} - S_Z D_Z Q_{IZ} + S_Y D_Y Q_{XZ})\\
p_Z = \frac{1}{4} ( 1 - S_X D_X Q_{XI} + S_Z D_Z Q_{IZ} - S_Y D_Y Q_{XZ}).
\end{align*}
}
\begin{equation}
\label{eq:Teleporterror1}
\begin{bmatrix}
p_I \cr
p_X \cr
p_Y \cr
p_Z \cr
\end{bmatrix} = \frac{1}{4} \begin{bmatrix}
1 & 1  &  1 & 1 \cr
1 & 1  & -1 & -1 \cr
1 & -1 &  1 & -1 \cr
1 & -1 & -1 & 1 \cr
\end{bmatrix} \begin{bmatrix}
1 \cr
S_X D_X Q_{XI} \cr
S_Y D_Y Q_{XZ} \cr
S_Z D_Z Q_{IZ} \cr
\end{bmatrix}.
\end{equation}
Note that only bit flip errors on the destination qubit and phase flip errors on the source qubit contribute to these errors. 

\subsection{Measurement errors}
\label{measurementerrors}
Suppose that we have probability $p_m$ of an $X$ measurement error. This is equivalent to probability $p_X=p_m$ of $X$ error and probability $p_I=1-p_m$ of an $I$ error (no error) right before measurement. From Eq. \ref{eq:pauliprob}, this has the channel 
\begin{equation*}
[1,x,y,z] = (1-p_m)\mathcal{O}(I)+p_m \mathcal{O}(X) = [1,1,m,m],
\end{equation*}
where 
\begin{equation}
\label{eq:m}
m = 1 - 2p_m. 
\end{equation}
Similarly, $Z$ measurement errors produce the channel $[1,x,y,z] = [1,m,m,1]$.

This changes the resulting noise after measurement from Eq. \ref{eq:measure} to
\begin{align*}
\mathcal{G}^{II} = \frac{1}{2}(A+mB) \quad \mathrm{and} \quad  \mathcal{G}^{IX} = \frac{1}{2} (A-mB).
\end{align*}
Then Eq. \ref{eq:Teleporterror1} becomes
\begin{equation*}
\begin{bmatrix}
p_I \cr
p_X \cr
p_Y \cr
p_Z \cr
\end{bmatrix} = \frac{1}{4} \begin{bmatrix}
1 & 1  &  1 & 1 \cr
1 & 1  & -1 & -1 \cr
1 & -1 &  1 & -1 \cr
1 & -1 & -1 & 1 \cr
\end{bmatrix} \begin{bmatrix}
1 \cr
m S_X D_X Q_{XI} \cr
m^2 S_Y D_Y Q_{XZ} \cr
m S_Z D_Z Q_{IZ} \cr
\end{bmatrix}.
\end{equation*}

\section{Special ancilla qubits}
\label{sec:special}
Suppose that we have a high distance (minimum weight of an undetected error) code. We have found ancilla qubit encodings that allow for a higher quantum fault tolerant threshold than other methods \cite{Knill}.  These ancilla qubits are logical qubits encoded in a high distance quantum code. We then use quantum teleportation to bring in logical data qubits.  

Once we have sufficiently reduced the rates of errors in these qubits via post-selection, we use quantum teleportation \cite{BBCJPW} to replace the logical ancilla qubits with logical data qubits, as is also done in the work of Knill \cite{Knill3} and Reichardt \cite{Reichardt}. The simplest teleportation is to use a Bell pair (two qubit cat state)
\begin{equation*}
\frac{1}{\sqrt{2}} (\ket{00} + \ket{11}) = \frac{1}{\sqrt{2}} \sum_{a=0}^1 \ket{aa},
\end{equation*}
and apply the standard teleportation gate given in Fig. \ref{fig:Teleportation}. If $\ket{b}$ represents the standard basis elements of $\ket{\psi}$, then we start with $\ket{b} \ket{a} \ket{a}$. After {\tt CNOT} is applied from the first to the second qubit, it becomes $\ket{b} \ket{a \oplus b} \ket{a}$. Measuring the second qubit results in $\ket{b} \ket{b}$ if $0$ is measured, and $\ket{b} \oplus \ket{b \oplus 1}$ if $1$ is measured. Applying $X$ to the last qubit if $1$ is measured results in $\ket{b} \ket{b}$. The last measurement and conditional gate gives either $\ket{b}$ or the initial $\ket{\psi}$ state. The three steps in Fig. \ref{fig:Teleportation} can be represented as 
\begin{equation*}
\ket{b} \ket{a} \ket{a} \rightarrow \ket{b} \ket{a \oplus b} \ket{a} \rightarrow \ket{b} \ket{b} \rightarrow \ket{b}.
\end{equation*}

While this both inputs and outputs $\ket{\psi}$, the advantage is that the output can be a post-selected logical qubit, which can be used to correct errors in the physical qubits that comprise an arbitrary logical data qubit.
\figTeleportation

Now, if we have the $3$ qubit cat state
\begin{equation*}
\frac{1}{\sqrt{2}}(\ket{000} + \ket{111}) = \frac{1}{\sqrt{2}} \sum_{a=0}^1 \ket{aaa},
\end{equation*}
and apply a similar process shown in Fig. \ref{fig:TeleportationSplit}, then after each of the $3$ steps, we have 
\begin{equation*}
\ket{b} \ket{a} \ket{a} \ket{a} \rightarrow \ket{b} \ket{a \oplus b} \ket{a} \ket{a} \rightarrow \ket{b} \ket{b} \ket{b} \rightarrow \ket{b} \ket{b}.
\end{equation*}
The effect of this is to send $\ket{0}$ to $\ket{00}$ and $\ket{1}$ to $\ket{11}$ with freshly post-selected qubits.
\figTeleportationSplit

In the Hadamard basis, the $3$ qubit cat state is transformed to
\begin{equation*}
\frac{1}{2} (\ket{000} + \ket{011} + \ket{101} + \ket{110}) = \frac{1}{2} \sum_{a=0}^1 \sum_{b=0}^1 \ket{a} \ket{b} \ket{a \oplus b}.
\end{equation*}
In Fig. \ref{fig:TeleportationMerge}, a circuit is shown that uses $2$ teleportations to bring in two additional states with the basis elements $\ket{c}$ and $\ket{d}$. After each step in this circuit, this results in
\begin{align*}
\ket{c} \ket{d} \ket{a} \ket{b} \ket{a \oplus b} \rightarrow \ket{c} \ket{d} \ket{a \oplus c} \ket{b} \ket{a \oplus b} \\
\rightarrow \ket{c} \ket{d} \ket{a \oplus c} \ket{b \oplus d} \ket{a \oplus b} \rightarrow \ket{c} \ket{d} \ket{b \oplus d} \ket{c \oplus b} \\
\rightarrow \ket{c} \ket{d} \ket{c \oplus d} \rightarrow \ket{d} \ket{c \oplus d} \rightarrow \ket{c \oplus d}.
\end{align*}
\figTeleportationMerge

If these two previous ancillas are used together, they give 
\begin{equation*}
\ket{a} \ket{b} \rightarrow \ket{a} \ket{a} \ket{b} \rightarrow \ket{a} \ket{a \oplus b},
\end{equation*}
which is {\tt CNOT}. Alternately, we can combine the two ancillas via teleportation to get the $4$ qubit state 
\begin{equation*}
\ket{a} \ket{b} \ket{a} \ket{a \oplus b} = \frac{1}{2}(\ket{0000} + \ket{1011} + \ket{0101} + \ket{1110}),
\end{equation*}
 and teleport $\ket{a}$ and $\ket{b}$ into the first and second qubits respectively as illustrated  in Fig. \ref{fig:CNOTTeleportation}.

\figCNOTTeleportation

For CSS codes, we can assume that noise from a logical {\tt CNOT} acting on two logical data qubits determines the fault tolerant thresholds. See section \ref{othergates} for a discussion of the other gates necessary for universal quantum computation.

\subsection{Repeated post-selection}
This section discusses the process of repeated post-selection. By performing a large number of post-selections, the resulting noise decreases in magnitude, converging to some value. 

The process which uses ancilla qubits can be broken down into two parts. First, we run post-selection repeatedly to refine ancilla qubits. This process works by detecting bit flip errors as in Fig. \ref{fig:ErrDetect}, then switching to the Hadamard basis and again applying the circuit of Fig. \ref{fig:ErrDetect} to detect what were originally phase flip errors.  As post-selection is repeatedly applied, the noise will converge to some fixed channel that can be estimated by iterative calculation. Second, we use teleportation to bring a logical data qubit into the ancilla qubits; simultaneously this applies a logical {\tt CNOT}. 

\paragraph{Noise under post-selection}
When we are running error detection with bit flip errors, we apply {\tt CNOT} transversally from the source qubits to the destination qubits. Bit flip errors will propagate from the first to the second qubit, just like we want them to. However, phase flip errors will propagate backwards. To deal with this, we minimize the phase flip errors in the destination qubits. The source qubits will also have been post-selected for phase flip errors in the previous step. Therefore, before we apply the {\tt CNOT}, we can assume that there is the same diagonal noise on the source and destination qubits. We let $\sigma = S_\sigma = D_\sigma$.  Then using Eq. \ref{eq:measure}, the noise after {\tt CNOT} followed by a post-selected measurement (measurement with no error) becomes
\begin{align*}
x_{\text{out}} = \frac{x^2 Q_{XI} + y^2 Q_{XZ}}{1 + z^2 Q_{IZ}}\\
y_{\text{out}} = xy \frac{Q_{YI} + Q_{YZ}}{1 + z^2 Q_{IZ}}\\
z_{\text{out}} = z \frac{Q_{ZI} + Q_{ZZ}}{1+z^2 Q_{IZ}}.
\end{align*}

In the next step, we run error detection on the other type of noise (bit flip or phase flip). This can be represented by applying a Hadamard gate to change bit flip errors into phase flip errors and vice versa, and then applying the same error detection process. This gives $x_{out} = z$, $z_{out}=x$, $y_{out} = y$, yielding the map
\begin{equation}
\label{eq:postselectmap}
[x,y,z] \rightarrow [z_{\text{out}}, y_{\text{out}}, x_{\text{out}}].
\end{equation}

Since we just post-selected for bit flip errors, after the Hadamard there is a low rate of phase flip errors $p_f = p_X + p_Y$. From Eq. \ref{eq:chanpauli}, we see that $x = 1 - 2p_f$ is close to $1$.

Under repeated post-selection, the noise converges to a solution $[1,x,y,z]$ of 
\begin{align*}
(1 + z^2 Q_{IZ})^2 = z (Q_{YI}+Q_{YZ}) (Q_{ZI}+Q_{ZZ})\\
x = \frac{1 + z^2 Q_{IZ}}{Q_{YI}+ Q_{YZ}}\\
y^2 Q_{XZ} = z (1+z^2 Q_{IZ}) - x^2 Q_{XI}. 
\end{align*}
This is the effective noise after infinity iterations of post-selection.

\paragraph{Teleportation}
Once there has been sufficient post-selection on the logical source qubit, we apply {\tt CNOT} from it to a logical destination qubit, and then measure the physical destination qubits in the normal $Z$ basis (which detects bit flip errors) and the source qubits in the $X$ basis (which detects phase flip errors) as shown in Fig. \ref{fig:Teleportation}. It is important that the two logical qubits which will have a {\tt CNOT} applied between them are post-selected last for opposite types of errors, one phase flip, one bit flip. This  results in bit flip errors and phase flip errors being equally probable --- useful for CSS codes which are very inefficient at correcting noise of only one of those types at a time. 

Therefore, $S_Y = D_Y$, $S_X = D_Z = x$ (or $z$) $S_Z = D_X = z$ (or $x$). Consider a pair of physical qubits that are about to undergo an $X$ measurement and a $Z$ measurement as part of the teleportation circuit shown in Fig. \ref{fig:Teleportation}. A bit flip error ($X$ or $Y$ errors) in the destination qubit results in an $X$ error. A phase flip error ($Z$ or $Y$ errors) in the source qubit results in a $Z$ error. If both occur, we have a $Y$ error; and if neither occur, then no error. Thus, the probabilities of Eq. \ref{eq:Teleporterror1} are
\begin{equation}
\label{eq:Teleporterror3}
\begin{bmatrix}
p_I \cr
p_X \cr
p_Y \cr
p_Z \cr
\end{bmatrix} = \frac{1}{4} \begin{bmatrix}
1 & 1  &  1 & 1 \cr
1 & 1  & -1 & -1 \cr
1 & -1 &  1 & -1 \cr
1 & -1 & -1 & 1 \cr
\end{bmatrix} \begin{bmatrix}
1 \cr
m xz Q_{XI} \cr
m^2 y^2 Q_{XZ} \cr
m xz Q_{IZ} \cr
\end{bmatrix},
\end{equation}
where $m$ is defined in terms of the probability of a measurement error in Eq. \ref{eq:m}. 

Since we assume independent noise on each qubit, we can judge how well the quantum code corrects this noise by using the optimal recovery adaptive concatenation technique of \cite{MoreChans}. These are calculated in the next section.

\begin{table}
\caption{\label{tab:errortypes} The probability of each of the $16$ two qubit Pauli errors resulting from a {\tt CNOT} gate for depolarizing noise and independent noise. Note that the probability of a measurement error $p_m$ is a separate parameter. The depolarizing noise used in this paper has no measurement errors; Knill noise is depolarizing noise with a  probability $p_m=\frac{4}{15}p$ of a measurement error.  Forward type noise has no measurement errors.}
\begin{ruledtabular}
\begin{tabular}{cccc}
Prob.       & Depolarizing   & Forward    & Independent noise \\
\hline
$p_{II}$    & $1-p$          & $(1-p_f)^2$  & $(1-p_f)^2(1-p_b)^2$\\
$p_{IX}$    & $\frac{p}{15}$ & $(1-p_f)p_f$ & $(1-p_f)p_f(1-p_b)^2$\\
$p_{IY}$    & $\frac{p}{15}$ & $(1-p_f)p_f$ & $(1-p_f)p_f(1-p_b)p_b$\\
$p_{IZ}$    & $\frac{p}{15}$ & $(1-p_f)^2$  & $(1-p_f)^2(1-p_b)p_b$\\
$p_{XI}$    & $\frac{p}{15}$ & $(1-p_f)^2$  & $(1-p_f)^2(1-p_b)p_b$\\
$p_{XX}$    & $\frac{p}{15}$ & $(1-p_f)p_f$ & $(1-p_f)p_f(1-p_b)p_b$\\
$p_{XY}$    & $\frac{p}{15}$ & $(1-p_f)p_f$ & $(1-p_f)p_fp_b^2$\\
$p_{XZ}$    & $\frac{p}{15}$ & $(1-p_f)^2$  & $(1-p_f)^2p_b^2$\\
$p_{YI}$    & $\frac{p}{15}$ & $(1-p_f)p_f$ & $(1-p_f)p_f(1-p_b)p_b$\\
$p_{YX}$    & $\frac{p}{15}$ & $p_f^2$      & $p_f^2(1-p_b)p_b$\\
$p_{YY}$    & $\frac{p}{15}$ & $p_f^2$      & $p_f^2 p_b^2$\\
$p_{YZ}$    & $\frac{p}{15}$ & $(1-p_f)p_f$ & $(1-p_f) p_f p_b^2$\\
$p_{ZI}$    & $\frac{p}{15}$ & $(1-p_f)p_f$ & $(1-p_f)p_f(1-p_b)^2$\\
$p_{ZX}$    & $\frac{p}{15}$ & $p_f^2$      & $p_f^2(1-p_b)^2$\\
$p_{ZY}$    & $\frac{p}{15}$ & $p_f^2$      & $p_f^2(1-p_b)p_b$\\
$p_{ZZ}$    & $\frac{p}{15}$ & $(1-p_f)p_f$ & $(1-p_f)p_f(1-p_b)p_b$\\
\end{tabular}
\end{ruledtabular}
\end{table}

\begin{table}
\caption{\label{tab:hashingfault} Noise that gives the hashing bound (Shannon entropy of $1$) under repeated post-selection. The $p_{\text{in}}$ is the input rate of errors for the types of noise as defined in this paper. The $p_{\sigma}$ are the probabilities of a Pauli error $\sigma$ on each qubit as given in Eq. \ref{eq:Teleporterror3}. \noises These values were determined by numerical calculations.}
\begin{ruledtabular}
\begin{tabular}{cccc}
            & Depolarizing & Knill & Forward \\
\hline
$p_{\text{in}}$ & 8.27515\% & 6.90240\% & 4.81816\% \\
$p_X=p_Z$       & 7.13361\% & 7.52699\% & 9.79217\% \\
$p_Y$           & 4.78136\% & 4.12990\% & 1.21061\%
\end{tabular}
\end{ruledtabular}
\end{table}

\begin{table}
\caption{\label{tab:thresholds} Fault tolerant thresholds in terms of $p_{\text{in}}$  assuming independent noise on each qubit. \noises}
\begin{ruledtabular}
\begin{tabular}{cccc}
Code         & Depolarizing & Knill & Forward \\
\hline
Hashing      & 8.2751\%   & 6.9024\%   & 4.8182\% \\
$[[7,1,3]]$  & 8.229(7)\% & 6.864(5)\% & 4.8036\%\\
$[[17,1,5]]$ & 8.2(0)\%   & 6.8\%      & 4.790\% \\
$[[23,1,7]]$ & 8.25\%     & 6.88\%     & 4.805\% \\
\end{tabular}
\end{ruledtabular}
\end{table}

\begin{table}
\caption{\label{tab:capacity} Capacity of codes for $(p_X,p_Y,p_Z)$ type noise}
\begin{ruledtabular}
\begin{tabular}{ccc}
Code & $(0,0,p)$ & $(p,p,p)$ \cr
\hline
Hashing      & 11.0028\%    & 6.3097\% \\
$[[7,1,3]]$  & 10.963(2)\%  & 6.270(4)\% \\
$[[17,1,5]]$ & 10.927(0)\%  & 6.251 \% \\
$[[23,1,7]]$ & 10.968\%     & 6.29\%  \\
$[[4,2,2]]$ and $[[6,2,2]]$ & 10.9466\% & 6.271(9)\%
\end{tabular}
\end{ruledtabular}
\end{table}

\section{Threshold calculations for uncorrelated noise on each qubit} 
\label{sec:noisemodels}
The probabilities $p_{\sigma \sigma'}$ of a Pauli error $\sigma \sigma'$ for the types of noise resulting from {\tt CNOT} that are discussed in this paper are given in Tab. \ref{tab:errortypes}. 

\subsection{Depolarizing noise}
Suppose that the {\tt CNOT} gate has probability $\frac{p}{15}$ of each of the non-identity two qubit Pauli errors. The total probability of no error is therefore $1-p$. This results in $Q_{\sigma \sigma'} = 1 - \frac{16}{15} p$, except for $Q_{II} = 1$.

We look at two possibilities for measurement errors. We refer to the case where there are no measurement errors as depolarizing noise. In this case, $p_m = 0$ and $m =1$. Knill noise \cite{Knill,Knill2} refers to the case where there is probability $p_m = \frac{4}{15} p$ of a measurement error, which results in $m = 1-\frac{8}{15} p$. 

We first assume the same independent noise on each qubit, and calculate the resulting such noise from Eq.  \ref{eq:Teleporterror3}. We find the noise corresponding to the Hashing bound for these noises in Tab. \ref{tab:hashingfault}, that is, when the Shannon entropy of the noise is $1$. We then determine whether a given code can correct this noise under repeated concatenation with itself by using the optimal recovery operator method of \cite{MoreChans}, which considers syndrome information from previous levels to determine the optimal recovery operator at each level, and uses Monte Carlo simulation to estimate the threshold. The stabilizers for the codes studied in Tab. \ref{tab:hashingfault} are given in \cite{GrasslCSS}. 

In Fig. \ref{fig:depol}, we plot the effect of the measurement errors $p_m = \frac{4}{15} r p$ (that is, the fraction $r$ of the Knill measurement error) on the fault tolerant threshold (assuming the hashing bound). 

\begin{figure*}
\includegraphics{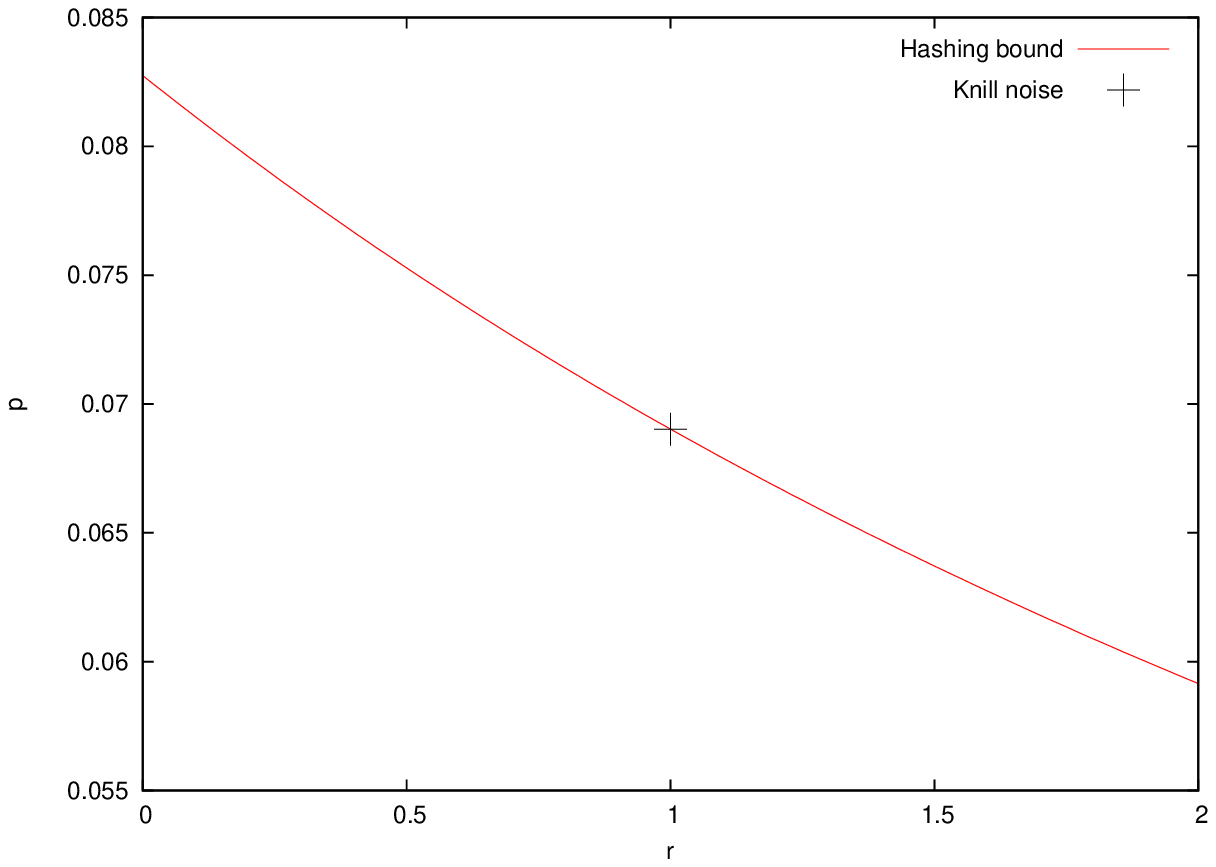}
\caption{\label{fig:depol}
The hashing bound for fault tolerant depolarizing noise. The {\tt CNOT} gate has probability $\frac{p}{15}$ of each non-identity two qubit Pauli error occurring. The probability of a measurement error is $p_m = \frac{4}{15} r p$, where the $r=1$ gives the measurement error used by Knill \cite{Knill}. }
\end{figure*}

\subsection{Independent bit flip and phase flip noise}
\label{sec:uncorrelated}

In this section, we consider noise from the {\tt CNOT} gate, where bit flips and phase flips act independently of each other, and the resulting noise on the source and destination qubits are also independent of each other. Note that since bit flip and phase flip errors are independent, their error detection and correction are done independently of each other. The calculations in this section can be performed directly from Eqs. \ref{eq:postselectmap} and \ref{eq:Teleporterror3}; however, because they are simpler in the case discussed here, we will go into more detail. 

The probability of a phase error on the source qubit and the probability of a bit flip error on the destination qubit are both designated $p_f$. The probability of a bit flip error on the source qubit and the probability of a phase flip error on the destination qubit are both designated $p_b$. The probability of a measurement error is $p_m$. The probabilities of each of the Pauli errors for this type of noise are calculated in Table \ref{tab:errortypes}. For example, since there is probability $1-p_f$ of no bit flip error on the destination qubit, probability $1-p_f$ of no phase flip error on the source qubit, probability $1-p_b$ of no phase error on the destination qubit, and probability $p_b$ of a bit flip error on the source qubit, it follows that $p_{XI}=(1-p_f)^2 (1-p_b) p_b$.

Let $p_g$ be the probability of a bit flip or phase flip error immediately after post-selection is performed for that type of error. Let $p_b$ be the probability of an error before that post-selection is run, i.e. after the other type of error detection is run.

We use the noise notation from Appendix \ref{sec:noise}. Before we run post-selection, we have the noise $x_b = 1 - 2p_b$ on both the source and destination qubits. After we apply ${\tt CNOT}$, the destination qubit gains the noise $f = 1 - 2 p_f$ representing the probability of forward type error from {\tt CNOT}. Measuring it results in the noise $m = 1 - 2p_m$, representing the probability of a measurement error. Therefore the total destination qubit noise is $x_g f m$. After post-selection, we must add the backwards type noise from {\tt CNOT} to the source qubit. From Sec. \ref{sec:noise}, we have
\begin{equation*}
x_g = b \text{ post} (x_b, x_b f m) = b \frac{x_b + x_b f m}{1 + x_b^2 f m}.
\end{equation*}

When we run post-selection on the other type of noise (bit flip or phase flip), the $x_g$ noise of this type back-actions from the ancilla qubits. In addition, the forward type noise contributes $f$ noise, and so the total noise after the other type of error detection becomes
\begin{equation*}
x_b = x_g^2 f = b^2 f x_b^2 (\frac{1 + fm}{1+x_b^2 f m})^2
\end{equation*}
before we run error detection on this type of noise (either bit flip or phase flip) again.

After repeated error detection and post-selection, the noise will tend towards an equilibrium value, which we calculate iteratively.

After the logical qubits have been sufficiently refined, we start the teleportation process given in Fig. \ref{fig:CNOTTeleportation}. Applying {\tt CNOT} results in a back-action. To improve the threshold, we have the source and data qubits optimized for opposite types of errors, so the back-action together with the regular noise results in $x_g x_b$ noise. In addition, the {\tt CNOT} gate gives $f$ noise, and the actual measurement gives $m$ noise, resulting in a total combined noise from teleportation of
\begin{equation}
\label{eq:combinednoise}
x=x_b x_g f m = x_g^3 f^2 m.
\end{equation}

For CSS codes, the measurement process detects both $X$ and $Z$ errors, and is equally good at correcting both. If there are independent probabilities $p$ of a bit flip and a phase flip error, then $(p_X, p_Y, p_Z) = (p-p^2,p^2,p-p^2)$. We consider for which types of noise the code can correct, assuming no additional errors during error detection and correction, as  in \cite{MoreChans}. For CSS codes, this has the same threshold as noise in the form $(p_X,p_Y,p_Z)=(0,0,p)$. In Tab. \ref{tab:capacity}, we calculate the thresholds for various CSS codes; these are found to be slightly below the hashing bound of $p=11.0028\%$ found in \cite{MoreChans}.

We define forward type noise to consist of probability $p_f$ of forward noise,with back noise probability $p_b$ and measurement noise probability $p_m$ both $0$. Therefore, $n=m=1$ and $f=1-2p_f$. At the hashing bound, the noises are $f=0.90363682$ (probability $p_f=4.818159\%$ of a forward type noise), $x_g = 0.98482389$ and $x_b = 0.87641757$. The combined hashing bound noise is $c=x_g x_b f = x_g^3 f^2 = 0.77994427$, that is $p_c = \frac{1-c}{2}$ is the hashing bound. 

\ignore{
\begin{table}
\caption{\label{tab:degeneracy} Calculation of how much degeneracy decreases the threshold for two CSS codes}
\begin{ruledtabular}
\begin{tabular}{ccc}
Variable       & $[[7,1,3]]$      & $[[23,1,7]]$ \\
\hline
minimum weight         &    4             &  8 \\
number                 &    7             & 506 \\
undetected crash ways  &  7 \binom{4}{2}  & 506 \binom{8}{4} \\
total undetected prob. & 3(7 \binom{4}{2}
prob crash             & \binom{7}{2} p^2 & \binom{23}{4} p^4 \\

Hashing      & 11.0028\%   & 6.3097\% \\
$[[7,1,3]]$  & 10.963(0)\% & 6.270(3)\% \\
$[[17,1,5]]$ & 10.927\%    & 6.2(5) \% \\
$[[23,1,7]]$ & 10.968\%    & - \\
\end{tabular}
\end{ruledtabular}
\end{table}
}

\section{Entropy-based correlated noise calculations}
\label{sec:entropy}
In the previous parts of this paper, we assumed that there is the same local uncorrelated noise on each qubit. However, degeneracies in the code caused by low weight stabilizer elements will introduce correlated errors that are noisier than the noise predicted by the uncorrelated noise model. The most significant degeneracies occur at the first level of the code. In this section, we calculate the actual noise after post-selection at the first level (and sometimes look at the weaker effects at the second level) of the code. We use the entropy of this resulting correlated noise to evaluate the fault tolerant threshold. We then compare these exact results to those of the uncorrelated case.  We see how the differences result from the degeneracies that arise from the low weight stabilizer elements. 

\subsection{Entropy of the $[[7,1,3]]$ code with forward type noise}
The $[[7,1,3]]$ code is discussed in Appendix \ref{sec:713}. 

Since the bit flip and phase flip errors are corrected independently of each other for this type of noise, we only have to consider correction of the one type of error. There can either be no error detected, in which case there was either a distance $0$ or a distance $3$ error, or there are $7$ equivalent syndromes corresponding to the detection of an error, in which case a distance $1$ or a distance $2$ error occurred. If $p_n$ are the probabilities of distance $n$ errors, then the Shannon entropy is
\begin{align*}
(p_0+p_3)(f(\frac{p_0}{p_0+p_3}) + f(\frac{p_3}{p_0+p_3}))\\
+(p_1+p_2)(f(\frac{p_1}{p_1+p_2}) + f(\frac{p_2}{p_1+p_2}))\\
= \sum_n f(p_n) - f(p_0 + p_3) - f(p_1 + p_2),
\end{align*}
where
\begin{equation*}
f(x) = -x \log_2 x.
\end{equation*}

We use the results of Sec. \ref{sec:713} to perform calculations in terms of the syndromes of the $[[7,1,3]]$ code that are similar to that of Sec. \ref{sec:uncorrelated}.

Tab. \ref{tab:capacity} shows that the $[[7,1,3]]$ CSS code can correct $(0,0,p)$ type noise for $p < 10.963(2)\%$. This noise has an entropy of $e_1=\frac{1}{2} 1.0020(5)$ after the first level, and $e_2=\frac{1}{2} 1.0011(8)$ after the 2nd level.

We use the processes described in this section to calculate what happens to one level of the $[[7,1,3]]$ code in terms of the probabilities of a distance $d$ error, instead of assuming the same noise on each qubit. We calculate for which value of $p_f$ the entropy after the second level of the code is the same $e_2$ as before, in the uncorrelated case. This gives a threshold for forward type noise of $p_f= 4.6700\%$. As a check, we note that the entropy after the 1st level was $\frac{1}{2} 1.0020(8)$, almost the same as before. By repeatedly concatenating the resulting noise for several levels, calculations have confirmed that this is indeed the new threshold.

\subsection{Analysis of the difference}
\label{sec:diff}
In the previous section, we calculate that the actual threshold for forward type noise for the $[[7,1,3]]$ code is $p_f=4.6700\%$. This compares to the original approximation of $p_f = 4.8036\%$ found in Tab. \ref{tab:thresholds}. 

The difference occurs because the code is slightly degenerate; there exist cases where the error that would be measured in the destination ancilla qubit corresponds to a stabilizer element, and therefore is not detected as an error. From Tab. \ref{tab:713dist}, it can be seen that the $[[7,1,3]]$ code has $7$ stabilizer elements of weight $4$. Thus, if there are $2$ errors in the source qubits and $2$ errors in the destination qubits, there can be an undetected weight $2$ error in the source qubits. For each stabilizer element, this has an approximate probability  $\binom{4}{2} p_g^2$ of occurring, where $p_g$ is the probability of an undetected error from Sec. \ref{sec:uncorrelated}. Together these give a factor of $42p_g^2$ difference.

In Eq. \ref{eq:combinednoise}, we see that errors left undetected (and uncorrected) after post-selection contribute $3$ times to the combined rate of noise analyzed for correctability by the code. Therefore, the total estimated difference in the probability of an encoded error (crash probability)  is $c_e = 126p_g^2$.

If $p_f = 4.6700\%$, then $p_g = 0.70\%$, and so the estimated difference is $c_e = 0.62\%$. To calculate the actual difference, we understand that if there is the same noise $x$ on each qubit, the resulting noise map for an encoded crash is $f_7(x)=\frac{7}{4}x^3 - \frac{3}{4}x^7$ \cite{FKSS}. For $p_f=4.6700\%$, $x=0.78794(5)$, and $f_7(x)=0.7147$. For $p_f=4.8036\%$, $x=0.78073(6)$, and $f_7(x)=0.7002$. The difference in crash probabilities is half the difference of $f_7(x)$, and is $c_a = 0.72\%$. 

\subsection{Further calculations}

\begin{table}
\caption{\label{tab:analysis} Analysis of difference for $[[7,1,3]]$ code. $p_a$ are the actual thresholds found from analyzing the correlated noise resulting from post-selection at the first level of the code. (However, later we find that the threshold for the forward noise is $p_a=4.6699\%$.) \noises}
\begin{ruledtabular}
\begin{tabular}{cccc}
Value          & Depolarizing  & Knill      & Forward  \\
\hline
$p_e$          & 8.229(7)\%    & 6.864(5)\% & 4.8036\%\\
$p_a$          & 8.1096\%      & 6.7785\%   & 4.6700\%\\
$e_2$          & 1.0012(1)     & 1.0011(9)  & 1.0011(7) \\
$p_s$          & 5.58\%        & 4.69\%     & 5.93\%  \\
$p_d$          & 9.42\%        & 9.48\%     & 10.04\%  \\
$p_g$          & 0.61\%        & 0.51\%     & 0.70\%  \\
$c_e$          & 0.47\%        & 0.33\%     & 0.62\%  \\
$c_a$          & 0.39\%        & 0.32\%     & 0.72\%\\
\end{tabular}
\end{ruledtabular}
\end{table}

In Tab. \ref{tab:analysis}, we perform similar calculations for depolarizing and Knill noise. $p_e$ is the threshold estimated by assuming the same noise on each qubit. $e_2$ is the logical entropy at the 2nd level of the code using this noise. $p_a$ is the actual threshold found by fully post-selecting at the first level of the $[[7,1,3]]$ CSS code. Using the same noise on each qubit with the actual probability of an error $p_a$: $p_s$ and $p_d$ are the probabilities of source and destination qubit errors immediately before post-selection. $p_g$ is the probability of an undetected error after post-selection. $c_e = 126 p_g^2$ is the estimated difference in crash probabilities for a bit flip error ($X$ or $Y$). Phase flip errors ($Y$ or $Z$) have the same probabilities.  $c_a$ is the actual (not the estimated) difference in the crash probabilities due to the degeneracies from the stabilizer elements. 

In each of the cases, the noise at threshold for the actual non-local noise $p_a$ is found to have the same $e_2$ as the $p_e$ noise, with the same noise on each qubit.  It is interesting to note that for all of these cases, the entropy at the 2nd level of the code, $e_2$, is the same; therefore to determine the threshold for a particular type of noise for the $[[7,1,3]]$, one just has to find when the entropy of the logical noise at the second level of the code is $e_2 = 1.0011(8)$. 

\begin{table}
\caption{\label{tab:thresholdvalues} Values at the hashing bound.  \noises}
\begin{ruledtabular}
\begin{tabular}{cccc}
Value          & Depolarizing  & Knill      & Forward  \\
\hline
$p$            & 8.27515\%     & 6.90240\%  & 4.81816\%\\
$p_s$          & 5.65\%        & 4.80\%     & 6.18\%  \\
$p_d$          & 9.56\%        & 9.67\%     & 10.40\%  \\
$p_g$          & 0.63\%        & 0.54\%     & 0.76\%  \\
\end{tabular}
\end{ruledtabular}
\end{table}

So far we have only considered the non-local effects of post-selection from the first level. What about the second level? In addition to the stabilizers in each of the sub-blocks from the previous level of the code, there are new stabilizers that consist of an encoded $\overline{X}$ in $4$ of the $7$ sub-blocks. The lowest weight for these is $12$, which consist of distance and weight $3$ errors in each of $4$ sub-blocks. There are $7$ of these stabilizers at the top level of the code, so there are $7$ ways to choose the sub-blocks. Each of the sub-blocks has $7$ elements of distance and weight $3$. Therefore there are $7^5$ stabilizers of distance $12$. The estimated difference in the crash probability is $c_e = 3 \times 7^5 \binom {12}{6} p_g^6$. 

For the forward type noise $c_e = 6.5 \times 10^{-6}$, which results in a difference in $p_f$ of less than $10^{-6}$, so the final actual threshold for the forward type noise for the  $[[7,1,3]]$ code is $p_f = 4.6699\%$. The effect on the other types of noise is insignificant.

\subsection{The $[[23,1,7]]$ CSS code}
The $[[23,1,7]]$ CSS code has $506$ weight $8$ elements corresponding to no error \cite{BWR}, i.e., $506$ weight $8$ stabilizer elements. Using the method of the previous section, the estimated difference in crash probability is 
\begin{equation*}
c_e = 3 \binom{8}{4} 506 p_g^4.
\end{equation*}
For the forward type noise near the hashing bound, this results in $c_e = 0.00035$. 

Again, using the noise notation of Appendix \ref{sec:noise}, the undetected distance $7$ elements have weights $7$, $11$, $15$, and $23$ \cite{BWR}, so the noise map for a logical crash is of the form  $f_{23}(x) = c_7 x^7 + c_{11} x^{11} + c_{15} x^{15} + c_{23} x^{23}$. Since no error input results in no error output, $f(1)=1$. Since this code can correct any distance $1$, $2$, or $3$ error, it follows that $f'(1)=f''(1)=f'''(1)=0$. Solving these gives the channel map of 
\begin{equation*}
f_{23}(x) = \frac{3795}{512}x^7 -\frac{805}{64}x^{11} + \frac{1771}{256}x^{15} - \frac{385}{512}x^{23}.
\end{equation*}
The crash probability is $\frac{1-f_{23}(x)}{2}$. For the same noise on each qubit, we found the threshold to be $p_e =4.805(4)\%$, and $x=0.78064$. Solving for when the difference in crash probability is $\Delta = 0.00035$, we get $p_r= 4.801(4)\%$, only lowering the threshold by $4 \times 10^{-5}$ in $p_f$. The actual threshold was calculated to be $p_a=4.800(3)\%$. 

\begin{table}
\caption{\label{tab:2317thresholdvalues} $[[23,1,7]]$ thresholds. \noises}
\begin{ruledtabular}
\begin{tabular}{cccc}
Value          & Depolarizing  & Knill      & Forward  \\
\hline
$p_e$          & 8.25(3)\%     & 6.88(2)\%  & 4.805(4)\%  \\
$c_e$          & 0.00017       & 0.00009    & 0.00035   \\
$\Delta_p$     & 0.003\%       & 0.002\%    & 0.0040\%  \\
$p_a$          & -             & -          & 4.800(3)\%\\
$p_r$          & 8.25(0)\%     & 6.88(0)\%  & 4.801(4)\%  \\
\end{tabular}
\end{ruledtabular}
\end{table}

These calculations were also performed for Knill and depolarizing noise in Tab. \ref{tab:2317thresholdvalues}. For the Forward type noise, $p_e=4.805(4)\%$, which results in a total noise of $p=10.968\%$, the threshold for one type of noise for the $[[23,1,7]]$ code. The logical entropy after the first level is $e_1=1.0016\%$. For the other types of noise, the values of $p_e$ were calculated by determining which noise gave this same logical entropy of $e_1$ after the first level. Based upon calculations with change in entropy near the threshold under different levels of concatenation for the $[[7,1,3]]$ and $[[17,1,5]]$ CSS codes, this method should be accurate to at most a couple times $10^{-5}$ for the entropy and a small fraction of $10^{-5}$ for the thresholds.

\subsection{Encoding and post-selection process}
\label{encoding}p
\figNNCNOT
The methods described earlier in this paper rely on the code being concatenated many times with itself. A possible problem occurs if we create encoded $\overline{\ket{0}}$ for the $[[7,1,3]]$ code like in Fig. \ref{fig:encode}, and post-select as described in this paper, then the $Z$ noise will converge to the completely depolarizing case, where each of the $8$ error syndromes are equally likely. The reason for this problem is that $\overline{Z}$ is now a stabilizer, and there are now $7$ stabilizer elements of weight $3$.

To correct this problem, instead of post-selecting just one logical qubit after the first level of the $[[7,1,3]]$ code, we first create a logical $\overline{\ket{0}}$ state for the $[[7,1,3]]$ code as in Fig. \ref{fig:encode} and again in the Hadamard basis, so that we have the logical states $\overline{\ket{0}}$ and $\overline{\ket{+}}$. We apply a logical {\tt CNOT} transversally from each of the physical qubits of the $\overline{\ket{+}}$ state to the $\overline{\ket{0}}$ state to create an encoded Bell pair $\frac{1}{\sqrt{2}} \overline{\ket{00} + \ket{11}}$ that uses $14$ physical qubits. Now there are no longer any stabilizer elements of weight $3$. The $7$ weight $4$ ones were discussed in Sec. \ref{sec:diff}. The next lowest weight stabilizers are the $49$ weight $6$ stabilizers that correspond to $\overline{Z} \otimes \overline{Z}$.

Post-selection can now be used to refine this state. To post-select, we start with two Bell pair states. We apply logical {\tt CNOT} from the first logical qubit of the first Bell pair to the second logical qubit of the second Bell pair. Similarly, we apply logical {\tt CNOT} from the second logical qubit of the first Bell pair to the first logical qubit of the second Bell pair. When applying the logical {\tt CNOT} gates, we mix up the orders of the qubits, to even out the errors, which is better for post-selection. There are 168 ways to permute the $7$ qubits of the $[[7,1,3]]$ CSS code and preserve the code. let $p(x)$ represent such a permutation. The distance $3$ codewords are those qubits $q_1$, $q_2$ and $q_3$ such that their parity is even, that is, $q_1 \oplus q_2 \oplus q_3 = 0$. The parity of $p(q_1)$, $p(q_2)$ and $p(q_3)$ is also even, and $p(1)$, $p(2)$ and $p(4)$ determine the permutation. 

We then measure the second Bell pair in the standard basis, rejecting the first Bell pair if an error is detected. We apply the Hadamard gate transversally to the remaining (first) Bell pair, to exchange bit flip errors and phase flip errors for the next round of post-selection. Notationally, we consider the initial encoding of the Bell pair to be the $0$ times post-selected states. To create the $n$ times post-selected state, we use an $n-1$ times post-selected state as the source (1st) Bell pair, and a $d_n$ times post-selected state for the destination (2nd) Bell pair. $d_n = 0, 0, 1, 2, 3, \ldots$ is a well-performing sequence for post-selection. 

After a sufficient amount of post-selection is performed, we take $7$ Bell pairs, and measure one of the logical qubits in the Hadamard basis on the 2nd qubit, rejecting if there is an error. We now have $7$ post-selected encoded qubits. We use these qubits to create the $\overline{\ket{0}}$ state for the  $[[7,1,3]]$ code encoded with itself (using a total of $49$ qubits), and use more Bell pairs to detect if there are any errors after each step, rejecting everything if an error is detected. We then construct an encoded Bell pair, using $98$ physical qubits, and repeat the process described above for post-selection, rejecting everything if there is an error detected in a single qubit. 

Once we have done this for a number of levels and the rate of errors is sufficiently low, we have two options. One is that we continue to use the $[[7,1,3]]$ code, and we construct the more complicated encoded ancilla shown in Fig. \ref{fig:CNOTTeleportation}. The other possibility is that we prefer to use some other code, such as the $[[23,1,7]]$ CSS code. In this case, we now encode into that code. However, we wish to remove the levels of the $[[7,1,3]]$ CSS code from the encoding, since the $[[7,1,3]]$ code might not have as good a threshold (when we actually use error correction, not this post-selection process). We can decode one level at a time, using the following process for decoding the $[[7,1,3]]$ CSS code. Suppose we have the encoded state $c_0 \overline{\ket{0}} + c_1 \overline{\ket{1}}$. We can represent this as
\begin{align*}
m_1= x_1 \oplus e &&  m_2 = x_2 \oplus e \\
m_3= x_2 \oplus x_1 \oplus e && m_4 =  x_4 \oplus e\\
m_5 = x_4 \oplus x_1 \oplus e && m_6= x_4 \oplus x_2 \oplus e\\
m_7= x_4 \oplus x_2 \oplus x_1 \oplus e,
\end{align*}
where $x_1,x_2,x_4$ are summed over $0$ and $1$, and $e$ represents the encoded basis states. We post-select it for $Z$ errors. Now we measure the last $4$ qubits, rejecting everything and starting over if the parity isn't even. $m_4 \oplus m_5 = x_1$ and $m_4 \oplus m_6 = x_2$, so by applying $X$ gates to the first $3$ qubits conditional on these measurements, we have $m_1=m_2=m_3=e$, which is $c_0 \ket{000} + c_1 \ket{111}$. We then measure two of the qubits in the Hadamard basis. To avoid introducing more error, we can reject if either one is in the $\ket{-}$ state. For the forward type noise near the hashing bound, the total bit flip error is about $3p_g \approx 2\%$, and the total phase flip error is about $p_b \approx 6\%$. Once we have removed the lowest level of the $[[7,1,3]]$ code, we run post-selection repeatedly, and then remove another level of the $[[7,1,3]]$ CSS code. Eventually, we remove all of the levels of the $[[7,1,3]]$ code and are left with a logical ancilla state in our desired code, which may be the $[[23,1,7]]$ code concatenated with itself repeatedly.

\ignore{
For instance, if just the first level of the $[[7,1,3]]$ code is considered, and we are encoding a logical $\overline{\ket{0}}$ state, since logical $\overline{Z}$ is now a stabilizer, there are now $7$ stabilizer elements of weight $3$, which lowers the threshold. In fact, the method described for post-selection will result in the

Our thresholds rely on that there is a sufficiently low rate of errors before post-selection is run, and that the code is concatenated several times with itself. This initial rate of errors depends on the scheme used for encoding. If the initial rates of errors are too noisy, the post-selection scheme described in this paper will result in the information-free completely depolarizing channel. 

Let us consider just one level of the $[[7,1,3]]$ code with the forward type noise, with $p_f=4.8\%$. If the logical state is an eigenstate of a logical Pauli operator $\overline{\sigma}$, then $\overline{\sigma}$ stabilizes the state. If the state is $\overline{\ket{0}}$ then there are only distance $0$ and $1$ $Z$ errors. The $Z$ errors can be described by the probability $q$ of a distance $1$ error. Two independent noise sources $q_1$ and $q_2$ combine as $q_1 + q_2 - \frac{8}{7}q_1 q_2$. There is probability $(1-q_1)(1-q-2)$ that both sources have no error, and probability $\frac{1}{7}q_1 q_2$ that they have the same distance $1$ error, so the total error after post-selection is $\frac{\frac{1}{7}q_1 q_2}{(1-q_1)(1-q_2)+\frac{1}{7}q_1 q_2}$. 

The $[[7,1,3]]$ CSS code with only one level 
However, it turns out that the method we have described for repeatedly post-selecting results in the rate of errors heading towards the completely depolarizing situation where each error syndrome is equally likely, and $q=\frac{7}{8}$. 

As seen in Fig. \ref{fig:encode}, construction of the logical $\overline{\ket{0}}$ state involves $9$ {\tt CNOT} gates, so the probability of a distance $1$ error is at most $q \leq 1 - (1 - 4.8\%)^9=35.77\%$.

For forward type noise, we have the map
\begin{equation*}
x_g \rightarrow f x_g \frac{1+f}{1+f^3 x_g^4}.
\end{equation*}
Suppose that $p_f = 4.8\%$, so $f=0.904$. Near the critical point, the derivative is
\begin{equation*}
2 \frac{1 - f^3 x_g^4}{1 + f^3 x_g^4} \approx 0.359387.
\end{equation*}
So $k$ iterations of post-selection results in the difference in $x_g$ from the critical value dropping by a factor of $0.359387^k$.
}

\section{Discussion of overhead}
\label{sec:converge}
In Sec. \ref{sec:entropy}, we found that the degeneracies of a code cause the errors on each qubit to cease being independent of each other, thereby lowering the threshold a bit. In this section, we will discuss how using finite resources affects the thresholds. We pick some noise rate below the threshold and a rate of error that we are willing to tolerate per logical gate, which results in a finite amount of overhead involved. We examine how the noise converges below the threshold.

\subsection{Post-selection overhead}
Suppose that there are $q$ logical qubits encoded in an $n$ qubit code, giving a total of $N=qn$ qubits. If the total bit flip noise on the destination qubits before measurement is $p$, then the probability of not rejecting everything is roughly $p_k \approx (1-p)^N$, and so we have to run this an average of $\frac{1}{p_k}$ times. The process described in Sec. \ref{encoding} results in an overhead of $o_s = \frac{1}{p_k} (o_{t-1} + o_{t-2}$ after $s$ post-selections. Then, roughly $o_s$ is $o( N (p_k^{-1}+1 - p_k)^s))$. After a while, the noise will tend toward an equilibrium point, getting a factor $r$ closer each time. Therefore, to get within $\epsilon$ of the equilibrium point requires $s$ to be of the order $\log_r \epsilon$. If $N$ is large, to get within $\epsilon$ of the equilibrium point will require something like $N (p_k^{-1})^{\log_r \epsilon} = N \epsilon^{-N log_r (1-p)}$ overhead. 

At the hashing bound for forward type noise, $p \approx 15.3$, so for Bell pairs for the $[[7,1,3]]$ code, the predicted probability of not rejecting is $p_k \approx (1-p)^{14} = 9.79\%$. The actual $p_k=9.21\%$. The observed rate of convergence is $r=0.766$. This results in $o(N\epsilon^{-0.665N})$ overhead. This is exponential overhead in the number of qubits, and doubly exponential in the number of levels of concatenation of the code.

\subsection{Convergence by level}

Suppose we have a distance $d$ code concatenated with itself $l$ times, the initial entropy was $t$, and the threshold entropy was $t_c$. Near the threshold, the entropy behaves roughly as $t \rightarrow t_c - (t_c - t) (d^l)^{\alpha}$, where $\alpha = \log_2 \log_2 e \approx 0.53$. 

Away from the threshold, if $\Delta = \frac{(t_c - t) d^{j \alpha}}{t_c}$, the rate of logical error has a magnitude of around $10^{-3}$ or $10^{-4}$ for $\Delta=2$, and goes to $0$ very rapidly after this. 

For example, suppose we have the $[[7,1,3]]$ CSS code concatenated with itself $l=4$ times and our noise is of the form $(0,0,p)$, and $p$ is $\frac{2}{3}$ of the threshold value of $10.963(2)\%$. Then $d=3$, $t_c = 0.49880$, and $t=0.3492$. Therefore, $\Delta \approx 2.5$. The probability of an encoded error is $2 \times 10^{-5}$. 

If the noise is some $\epsilon$ below the threshold, then to get the noise below a given value requires that $d^l$ be $o(\epsilon^{-\alpha^{-1}})$. If the code encodes $1$ qubit into $n$ qubits, then the number of qubits $n^l$ is $o(\epsilon^{-\frac{\log_d n}{\alpha}})$. For the $[[7,1,3]]$ code, this is $o(\epsilon^{3.35})$.

\ignore{
Suppose have a distance $d$ code concatenated with itself $n$ times, and the rate of errors is $p$, which is less than the threshold of $p_c$. At exactly the threshold, after a few levels of concatenation, the logical errors will have some constant rate of error ($1$ - fidelity) $p_o$, which is around $25\%$ to $30\%$.

The total distance of the code is $D=d^n$. In terms of $p$, the rate of error is proportional to $p^{D^{\alpha}}$, where $\alpha \approx 0.53$ and could be $\log_2 \log_2 e$. Then, the output logical error is
\begin{equation*}
p_l \approx (\frac{p}{p_c})^{d^{\alpha n}} p_o.
\end{equation*}

We can alternately write this as
\begin{equation*}
\frac{ \log\frac{p_l}{p_o}}{\log \frac{p}{p_c}} = D^{\alpha} = d^{\alpha n}.
\end{equation*}

Suppose we have a distance $3$ code and the rate of errors is half of that of the threshold ($p=0.5 p_c$). Suppose we wish to find how many levels of concatenation are required to have an output error of $10^{-6}$. Assuming  that $p_o = 30\%$, we solve
\begin{align*}
\frac{ \log \frac{10^{-6}}{0.3}}{\log 0.5} = 18.20\\
\sqrt[\alpha]{18.20} = 242 = 3^n < 3^5 = 243,
\end{align*}
and find that the 5th level of concatenation works.
}

\subsection{Fixed fidelity points}
Note that just below the threshold, the rate of an error increases for the first few levels of concatenation. Suppose we require that the rate of error be the same after the first level of the code as it was from the noise in Eq. \ref{eq:measure}. For the Knill type noise with the $[[7,1,3]]$ CSS code, $p=3.472\%$ has a fidelity of $p_I=0.90602$ in both cases. This can be thought of as a very crude proof that $p=3.472\%$ is not above the threshold. Other values are calculated in Tab. \ref{tab:fixedpoints}

\begin{table}
\caption{\label{tab:fixedpoints} Fidelity the same at 0th and 1st levels of code. \noises}
\begin{ruledtabular}
\begin{tabular}{cccc}
Code         & Noise        & p        &  fidelity  \\
\hline
$[[7,1,3]]$  & Knill        & 3.472\%  & 0.90602 \\
$[[7,1,3]]$  & Depolarizing & 4.039\%  & 0.91122 \\
$[[7,1,3]]$  & Forward      & 2.9595\% & 0.87703 \\
$[[23,1,7]]$ & Forward      & 3.5471\% & 0.85108 \\
\end{tabular}
\end{ruledtabular}
\end{table}

\section{Conclusion}
We find that for various CSS codes, particularly the $[[23,1,7]]$ CSS code, when extremely large amounts of resources (i.e. remotely interacting ancilla qubits) are available, the threshold is close to that of the hashing bound (when the Shannon entropy of the noise is $1$). We conjecture that the hashing bounds given in Table \ref{tab:hashingfault} represent an upper bound for fault tolerant thresholds. This bound is much lower than the best previously known upper bounds of $45\%$ \cite{BCLLSU} and  $29.3\%$ \cite{Kempe2008}. It is higher than a fault tolerant threshold for the adversarial noise model \cite{Kay}. 

The CSS codes are formed from classical codes and therefore do not exceed the Shannon hashing rating bound. The hashing bound had been conjectured to be the upper bound for non fault tolerant error correction, but it was found that codes can exceed this \cite{ShorSmolin,DiVincenzo98,SS,capacity}. However, the codes that exceed the hashing bound seem to do so because of their degeneracies, and these degeneracies substantially hurt the threshold, as is seen in Sec. \ref{sec:entropy}. The bit flip codes, which seem to yield the highest capacities \cite{capacity}, have $\binom{n}{2}$ ways of having $Z$ errors on two qubits, which results in no error. This effect on the threshold is large enough that the threshold will be below the hashing bound.

We conjecture that if {\tt CNOT} is the only multiple qubit gate implemented directly, the fault tolerant hashing bound described in this paper serves as an upper bound on the fault tolerant threshold. As can be seen from Tab. \ref{tab:2317thresholdvalues}, the $[[23,1,7]]$ CSS code provides high fault tolerant thresholds, which are at least $99.6\%$ of the hashing bound thresholds given in Tab. \ref{tab:hashingfault}, leaving little room for improvement.

In the future, we hope to apply these ideas to analyze the amount of overhead, and develop new realistic fault tolerant schemes. Unfortunately, it is unrealistic to get close to the fault tolerant thresholds given here. The resources needed to run quantum computing very close to these thresholds will be exceedingly high; to do so would require googols of universes with remotely interacting qubits. However, it would be useful to compare schemes with restricted amounts of overhead to the truly ultimate thresholds found in this paper. Since the thresholds found here are so high, achieving even some fraction of them with a reasonable amount of overhead may offer an improvement over current schemes. For example, we find a threshold of $6.9\%$ for Knill noise, which is much larger than the $1\%$ threshold with a reasonable amount of overhead found by Knill \cite{Knill}.

\begin{acknowledgments}
We thank K. B. Whaley for helpful discussions, and careful reading of the manuscript. Quantum circuits were drawn using Q-Circuit \cite{QCircuit}. 
\end{acknowledgments}

\appendix

\section{Code method}
\label{sec:codemethod}
In this section, we use the quantum error correction channel formalism of \cite{FKSS} to find the results of most of Sec. \ref{sec:basic} by an alternate means.

Suppose we have a code that encodes $k$ qubits into $n$ qubits. We have an operator $\mathcal{E}$ which encodes $4^k$ dimensional density vectors representing density states on $k$ qubits into $4^n$ dimensional density vectors (as shown in Eq. \ref{eq:vectordensity}) representing density states on $n$ qubits. The noise that occurs on these $n$ qubits is represented by a $4^n$ by $4^n$ dimensional operator $\mathcal{N}$. This is followed by measuring the stabilizer operators to determine the syndrome $i$. We next apply a recovery operator $R_i$ and decode to evaluate the resulting logical errors. This is the channel map technique described in  \cite{Rahn:02a,FKSS,MoreChans}.

The resulting noise map for each recovery operator $R_i$ is
\begin{equation*}
\mathcal{G}^{R_i} =  \frac{1}{2^{n-k}} \mathcal{E}^t \circ \mathcal{O}(R_i) \circ \mathcal{N} \circ \mathcal{E}.
\end{equation*}
The total noise map is $\mathcal{G} = \sum_i \mathcal{G}^{R_i}$.

Now we use these channel maps for a different way of looking at Sec. \ref{sec:basic}. If the physical data qubit was $\alpha \ket{0} + \beta \ket{1}$, we model this by having the $2$ physical qubits in the state $\alpha \ket{00} + \beta \ket{11}$, apply {\tt CNOT}, and then measure. $0$ represents no error and $1$ represents an error, just like in Fig. \ref{fig:PhysDet}. If this process is noise free, the data qubit is now $\alpha \ket{0} + \beta \ket{1}$.

\ignore{
\subsection{Calculations}
}
In Fig. \ref{fig:PhysDet}, we assumed that the source qubit has a pre-existing noise $\mathcal{S}$, and the destination qubit has a pre-existing noise $\mathcal{D}$. We then apply {\tt CNOT}, and the noise $\mathcal{Q}$ is created. We follow this by measuring the destination qubit.

We look at the two codes in Tab. \ref{tab:codes}. C1 is the two qubit bit flip code. By applying {\tt CNOT} from its first qubit to its second qubit, we change it into the C2 code, and vice versa.

\begin{table}
\caption{\label{tab:codes}Description of two codes}
\begin{ruledtabular}
\begin{tabular}{ccc}
              & C1         & C2 \cr
\hline
$\overline{\ket{0}}$     & $\ket{00}$  & $\ket{00}$ \cr
$\overline{\ket{1}}$     & $\ket{11}$  & $\ket{10}$ \cr
$S$                      & $\{II,ZZ\}$ & $\{II,IZ\}$ \cr
$\mathcal{E}_I$          & $II + ZZ$   & $II + IZ$ \cr
$\mathcal{E}_X$          & $XX + YY$   & $XI + XZ$ \cr
$\mathcal{E}_Y$          & $XY + YX$   & $YI + YZ$ \cr
$\mathcal{E}_Z$          & $ZI + IZ$   & $ZI + ZZ$ \cr
$R$                      & $\{II,IX\}$ & $\{II,IX\}$ \cr
\end{tabular}
\end{ruledtabular}
\end{table}

These can be written out in matrix form. For example for $C2$,

\begin{equation*}
  {\cal E}^t =
  \left[
  \begin{array}{cccccccccccccccc}
    1 & 0 & 0 & 1 & 0 & 0 & 0 & 0 & 0 & 0 & 0 & 0 & 0 & 0 & 0 & 0 \cr
    0 & 0 & 0 & 0 & 1 & 0 & 0 & 1 & 0 & 0 & 0 & 0 & 0 & 0 & 0 & 0 \cr
    0 & 0 & 0 & 0 & 0 & 0 & 0 & 0 & 1 & 0 & 0 & 1 & 0 & 0 & 0 & 0 \cr
    0 & 0 & 0 & 0 & 0 & 0 & 0 & 0 & 0 & 0 & 0 & 0 & 1 & 0 & 0 & 1
  \end{array}
  \right] .
\end{equation*}

Let $\mathcal{O}({\tt CNOT})$ be the superoperator for {\tt CNOT}. Together with its noise operator, the faulty {\tt CNOT} gate is $\mathcal{Q} \circ \mathcal{O}({\tt CNOT})$. Together with the pre-existing noise, we have
\begin{equation}
\mathcal{Q} \circ \mathcal{O}({\tt CNOT}) \circ \mathcal{S} \otimes \mathcal{D}.
\end{equation}
Note that this changes from code C1 to code C2. If we wish to do the calculations in terms of code C1, we apply {\tt CNOT} at the end. If we wish to do the calculations in terms of code C2, we apply {\tt CNOT} at the beginning. The logical noise from these codes are
\begin{align*}
\mathcal{N}_{C1} =  \mathcal{O}({\tt CNOT})  \circ \mathcal{Q} \circ \mathcal{O}({\tt CNOT}) \circ \mathcal{S} \otimes \mathcal{D} \\
\mathcal{N}_{C2} = \mathcal{Q} \circ \mathcal{O}({\tt CNOT}) \circ \mathcal{S} \otimes \mathcal{D} \circ \mathcal{O}({\tt CNOT}).
\end{align*}
By conjugating, {\tt CNOT} produces the following map $c$ which acts on two qubit Pauli matrices as $c(\sigma)$ as
\begin{align*}
(II), (IX), IY \leftrightarrow ZY, IZ \leftrightarrow ZZ, XI \leftrightarrow XX\\
 XY \leftrightarrow YZ, XZ \leftrightarrow YY, YI \leftrightarrow YX, (ZI), (ZX).
\end{align*}
This map has order $2$, that is, $c(c(\sigma)) = \sigma$. Note that bit flips ($X$ or $Y$ ) propagate forward as $X$, and phase flips ($Y$ or $Z$) propagate backwards as $Z$.

\subsection{Diagonal noise}
We now use the shorthand $\mathcal{N}_{\sigma} = \mathcal{N}_{\sigma, \sigma}$. The initial source noise has diagonal noise $\mathcal{S}=[1,S_X,S_Y,S_Z]$, and similarly the destination noise has the diagonal noise $\mathcal{D}=[1,D_X,D_Y,D_Z]$. Control can be used to make the noise diagonal, as was done with the Singular  Value Decomposition Theorem for channels \cite{FKSS}. This turns one qubit unital noise into diagonal noise, and any non-unital part has little effect on the threshold.

If there is no error detected, and $\sigma$ and $\sigma'$ are one qubit Pauli matrices, then
\begin{align*}
\mathcal{G}^{II}_{\sigma, \sigma'} = \frac{1}{2} (N_{\sigma I, \sigma' I} + N_{\sigma I, \sigma' Z} + N_{\sigma Z, \sigma' I} + N_{\sigma Z, \sigma' Z})\\
\mathcal{G}^{XI}_{\sigma, \sigma'} = \frac{1}{2} (N_{\sigma I, \sigma' I} + N_{\sigma I, \sigma' Z} - N_{\sigma Z, \sigma' I} - N_{\sigma Z, \sigma' Z}).
\end{align*}

In the case of diagonal noise, we have Eq. \ref{eq:Ntraceout}.

\section{Other gates and fault tolerant transversal gates}
\label{othergates}
For the doubly even (stabilizers have weights divisible by $4$) CSS codes discussed in this paper, the Clifford gates are encoded as themselves (or their adjoint) transversally \cite{Gottesman:97b}.   Measurement of a Pauli operator and {\tt CNOT} are also encoded as themselves transversally. Together with any $1$ qubit pure non Pauli eigenstate, these give universal quantum computation \cite{BWR}. 

Transversely encoded gates are very useful to prevent propagation of errors. For a more in depth discussion of how these are useful for fault tolerance, see \cite{Preskill}.

Typically, one prepares a "magic" state, $\frac{1}{\sqrt{2}}( \ket{0} + \sqrt{i} \ket{1})$ or $\cos(\frac{\pi}{8})\ket{0}+\sin(\frac{\pi}{8}) \ket{1}$ that can be used to create the $\frac{\pi}{8}$ gate.  We discuss $3$ approaches to get a "magic state":
\begin{itemize}
\item The method described by Nielsen and Chuang to measure the state using a cat state \cite{CN}.
\item There is a $15$ qubit Hamming CSS code, where the $\frac{\pi}{8}$ gate can essentially be encoded as itself \cite{SBAK}. If there is a Bell pair consisting of the $15$ qubit CSS code and some other code, this can be used to teleport the logical  qubit into another code. A similar method would be noting that the $15$ qubit code is similar to the $[[7,1,3]]$ CSS code, and converting it directly into that code.
\item We have a high distance code encoded in a Bell pair. One of the logical qubits is decoded into a physical qubit \cite{Knill3}, and then the $\frac{\pi}{8}$ gate is applied to that qubit, followed by a measurement in the $X$ eigenbasis. Then, the code is encoded into the $15$ qubit code, and post-selection is run on that as in \cite{SBAK}. 

\end{itemize}

\section{Noise}
\label{sec:noise}
In this section, we discuss our representation of noise, and derive a formula for the noise after post-selection.

If there is probability $p_x$ of an error, we write the noise as $x=1-2p_x$.

Since $x = (1-2p_y)(1-2p_z) = 1 - 2 (p_y + p_z - 2p_y p_Z)$, two noises $y$ and $z$ are combined as $x=yz$.

Suppose a qubit with noise $x_1$ has {\tt CNOT} applied to a qubit with noise $x_2$, and post-selection is performed by measuring the 2nd qubit in the standard basis. We either have no error with probability $p_g = (1-p_1)(1-p_2)$, or an undetected error with probability $p_b = p_1 p_2$. The new probability of an error is then $p = \frac{p_b}{p_g+p_b}$, and so
\begin{align*}
\text{post }(x_1, x_2) = 1 -2p =  \frac{p_g-p_b}{p_g+p_b}\\
 = \frac{2 - 2p_1 - 2p_2}{2 -2p_1 -2p_2 + 4p_1p_2} = \frac{x_1 + x_2}{1+x_1 x_2}.
\end{align*}

\section{$[[7,1,3]]$ CSS code}
\label{sec:713}
In this section, we find non fault tolerant probabilities of errors for the $[[7,1,3]]$ CSS code, which are applied earlier for fault tolerant calculations. 

The $[[7,1,3]]$ CSS code is generated from the classical $[7,4,3]$ Hamming code given by the parity check matrix
\begin{equation}
\label{matrixp}
\begin{bmatrix}
0 & 0 & 0 & 1 & 1 & 1 & 1\cr
0 & 1 & 1 & 0 & 0 & 1 & 1\cr
1 & 0 & 1 & 0 & 1 & 0 & 1\cr
\end{bmatrix} .
\end{equation}

The stabilizer generators for the $[[7,1,3]]$ CSS code are
\begin{align*}
g_1= IIIXXXX &&
g_4= IIIZZZZ \\
g_2 = IXXIIXX &&
g_5 = IZZIIZZ \\
g_3 = XIXIXIX &&
g_6 = ZIZIZIZ
\end{align*}

The encoded Pauli operators are
\begin{equation*}
\overline{X} = X^{\otimes 7}, \overline{Z} = Z^{\otimes 7}, \overline{Y} = i \overline{X} \overline{Z} = - Y^{\otimes 7}
\end{equation*}
A circuit for generating the encoded $\overline{\ket{0}}$ is given in Fig. \ref{fig:encode}.

\subsection{Probabilities of errors of a given distance}
Suppose we have the $[[7,1,3]]$ CSS code, with uncorrelated bit flip and phase flip noise. Then the two types of noise can be corrected independently of each other. If we wish to find the probability $m_i$ of a distance $i$ phase flip or bit flip error, we need just consider the $[7,4,3]$ classical code that generates the CSS code. 

Looking at the parity check in Eq. \ref{matrixp}, we see that there are $8$ different possible error syndromes. The $0$ syndrome corresponds a distance $0$ error (no error) or distance $3$ error (undetected error). The other syndromes correspond to either a distance $1$ or distance $2$ error. The code corrects any one error. Counting the $2^7$ different possible errors, and their distances, we get Tab. \ref{tab:713dist}. 

\ignore{
\begin{equation}
\label{eq:713dist}
\begin{array}{c|c|c|c|c|c}
\text{Num. errors} & d=0 & d=1 & d=2 & d=3 & \text{Tot. cases}\cr
\hline
0 & 1 & 0 & 0 & 0 & 1 \cr
1 & 0 & 7 & 0 & 0 & 7 \cr
2 & 0 & 0 & 21 & 0 & 21 \cr
3 & 0 & 28 & 0 & 7 & 35 \cr
4 & 7 & 0 & 28 & 0 & 35 \cr
5 & 0 & 21 & 0 & 0 & 21 \cr
6 & 0 & 0 & 7 & 0 & 7 \cr
7 & 0 & 0 & 0 & 1 & 1 \cr
\end{array}.
\end{equation}
}

\begin{table}
\caption{\label{tab:713dist} Distances $d$ of errors of weight $w$ (errors on $w$ qubits) for the $[7,4,3]$ classical code that generates the $[[7,1,3]]$ CSS code.}
\begin{ruledtabular}
\begin{tabular}{cccccc}
$w$ & $d=0$ & $d=1$ & $d=2$ & $d=3$ & \text{Tot. cases} \\
\hline
0 & 1 & 0 & 0 & 0 & 1 \\
1 & 0 & 7 & 0 & 0 & 7 \\
2 & 0 & 0 & 21 & 0 & 21 \\
3 & 0 & 28 & 0 & 7 & 35 \\
4 & 7 & 0 & 28 & 0 & 35 \\
5 & 0 & 21 & 0 & 0 & 21 \\
6 & 0 & 0 & 7 & 0 & 7 \\
7 & 0 & 0 & 0 & 1 & 1 \\
\end{tabular}
\end{ruledtabular}
\end{table}

In terms of the quantum stabilizer code, distance $0$ errors (no error) correspond to members of the stabilizer group $S$, which preserve the codespace. Distance $3$ errors are encoded errors that commute with the stabilizer elements, and so are in $C(S) \setminus S$, where the $C$ represents centralizer. Distance $1$ and $2$ errors are errors outside of the codespace.

If we have independent probability $p$ of an error on each qubit, then the probability of a distance $d$ error is $m_d$, where
\begin{align*}
m_0 =(1-p)^7+7p^4(1-p)^3\\
m_1 =7 p (1-p)^6 + 28 p^3 (1-p)^4 + 21 p^5 (1-p)^2 \\
m_2 =21 p^2 (1-p)^5+28p^4(1-p)^3 + 7 p^6 (1-p)\\
m_3 = 7 p^3 (1-p)^4 + p^7.
\end{align*}
If we use $x=1-2p$ instead, we get

\begin{equation}
\label{eq:m_i}
\begin{bmatrix}
m_0 \cr
m_1\cr
m_2\cr
m_3\cr
\end{bmatrix} = \frac{1}{16} \begin{bmatrix}
1 & 7 & 7 & 1 \cr
7 & 7 & -7 & -7 \cr
7 & -7 & -7 & 7 \cr
1 & -7 & 7 & -1 \cr
\end{bmatrix} \begin{bmatrix}
1 \cr
x^3 \cr
x^4 \cr
x^7 \cr
\end{bmatrix}.
\end{equation}

Alternately, we see from the centralizer that $m_0=\frac{1+7x^3+7x^4+x^7}{16}$. This represents a distance $0$ error (no error).  What happens to each of the terms as we add an error of a fixed $d$? We find all the ways to add an error, and replace $x$ with $-x$ on that particular qubit, and then average the result. We get
\begin{equation}
\begin{array}{c|c|c|c}

d=0 & d=1 & d=2 & d=3 \cr
1 & 1 & 1 & 1 \cr
x^3 & \frac{1}{7}x^3 & -\frac{1}{7}x^3 & -x^3 \cr
x^4 & -\frac{1}{7}x^4 & -\frac{1}{7}x^4 & x^4 \cr
x^7 & -x^7 & x^7 & -x^7 \cr
\end{array}.
\end{equation}

 For this code, an encoded error is a distance $2$ or $3$ error. Since the $0$ syndrome is either distance $0$ or $3$, and the other $7$ error syndromes are each either distance $1$ or distance $2$, and the syndrome calculation is linear over $F_2^7$,  we get $16$ equivalence classes of $8$ errors each. We can choose a representative element of each, for example all of the $0$, $1$, $6$, and $7$ qubit $Z$ errors. We then calculate what happens when errors are combined by finding the distance of the combined error. For example, if we pick $2$ random distance $1$ errors, we have a $\frac{1}{7}$ probability of a distance $0$ error, and a $\frac{1}{6}$ probability of a distance $2$ error. Then, if we combine a probability $a_d$ of having distance $d$ errors with a probability  $b_d$ of having distance $d$ errors, we get a combined probability $a'_d$ of having distance $d$ errors

\begin{equation}
\label{newerrtype}
\begin{bmatrix}
a'_0 \cr
a'_1 \cr
a'_2 \cr
a'_3 \cr
\end{bmatrix} =
\begin{bmatrix}
b_0 & \frac{1}{7} b_1 & \frac{1}{7} b_2 & b_3 \cr
b_1 & b_0 + \frac{6}{7} b_2 & b_3 + \frac{6}{7} b_1 & b_2 \cr
b_2 & b_3 + \frac{6}{7} b_1 & b_0 + \frac{6}{7} b_2 & b_1 \cr
b_3 & \frac{1}{7}b_2 & \frac{1}{7} b_1 & b_0 \cr
\end{bmatrix}
\begin{bmatrix}
a_0 \cr
a_1 \cr
a_2 \cr
a_3 \cr
\end{bmatrix} .
\end{equation}

If we post-select on $a$ and $b$, then we keep the result with probability
\begin{equation*}
p_k = a_0 b_0 + \frac{1}{7} a_1 b_1 + \frac{1}{7} a_2 b_2 + a_3 b_3,
\end{equation*}
and the result is
\begin{equation}
\frac{1}{p_k} \begin{bmatrix}
a_0 b_0 \cr
\frac{1}{7} a_1 b_1 \cr
\frac{1}{7} a_2 b_2 \cr
a_3 b_3 \cr
\end{bmatrix}.
\end{equation}

\end{document}